\numberwithin{equation}{section}
\renewcommand{\baselinestretch}{1.6}
\newcommand{\be}{\begin{equation}}
\newcommand{\ee}{\end{equation}}
\newcommand{\beaa}{\begin{eqnarray*}}
\newcommand{\eeaa}{\end{eqnarray*}}
\newcommand{\bea}{\begin{eqnarray}}
\newcommand{\eea}{\end{eqnarray}}
\newcommand{\tr}{\operatorname{tr}}
\newcommand{\mX}{\mathcal{X}}
\newcommand{\mZ}{\mathcal{Z}}
\newcommand{\mM}{\mathcal{M}}
\newcommand{\tvec}{\text{vec}}
\newtheorem{theorem}{ \noindent T{\footnotesize HEOREM}}
\newtheorem{prop}{ \noindent P{\footnotesize ROPOSITION}}[section]
\newtheorem{lemma}{ \noindent L{\footnotesize EMMA}}
\newtheorem{remark}{ \noindent R{\footnotesize EMARK}}
\begin{document}

\title{
Hypothesis Testing for High-Dimensional Matrix-Valued Data
}
\author{ Shijie Cui$^{a}$, Danning Li$^{b}$, Runze Li$^{a}$, and  Lingzhou Xue$^{a}$\\
$^{a}$The Pennsylvania State University
and $^{b}$Northeast Normal University
}

\date{}
\maketitle
\renewcommand{\baselinestretch}{1.6}
\begin{abstract}
This paper addresses hypothesis testing for the mean of matrix-valued data in high-dimensional settings. We investigate the minimum discrepancy test, originally proposed by Cragg (1997), which serves as a rank test for lower-dimensional matrices. We evaluate the performance of this test as the matrix dimensions increase proportionally with the sample size, and identify its limitations when matrix dimensions significantly exceed the sample size. To address these challenges, we propose a new test statistic tailored for high-dimensional matrix rank testing. The oracle version of this statistic is analyzed to highlight its theoretical properties. Additionally, we develop a novel approach for constructing a sparse singular value decomposition (SVD) estimator for singular vectors, providing a comprehensive examination of its theoretical aspects. Using the sparse SVD estimator, we explore the properties of the sample version of our proposed statistic. The paper concludes with simulation studies and two case studies involving surveillance video data, demonstrating the practical utility of our proposed methods.
\end{abstract}

\par \vspace{9pt}
\noindent%
{\it Keywords:}  Matrix; rank;  hypothesis testing; minimum discrepancy test; sparsity.
\vfill

\section{Introduction}

With the rapid expansion of data in various fields, its underlying structure has become more complex and multifaceted. In contemporary datasets, data is frequently organized as multi-dimensional arrays, such as matrices or tensors, moving beyond the simplicity of vector representations. For instance, colored images from continuous surveillance footage are typically represented as three-dimensional tensors, whereas black-and-white images are conceptualized as matrices. Additionally, in each dimension, these datasets exhibit a characteristic where the number of pixels or features surpasses the sample count, categorizing them as high-dimensional. This shift in data representation underscores the evolving complexity and diversity of data structures in the digital age.

Statistical inference focusing on means has garnered significant interest, though much of the prior work has concentrated on testing the means of random vectors. For vectors of fixed dimension, the Hotelling $T^2$ test proposed by \citet{hotelling1931generalization} as a method based on sample means, stands out as the most prominent. However, when the dimension $p$ of a vector is comparable with the sample size $n$, the power of the Hotelling $T^2$ test is low. Additionally, when $p$ is larger than $n$, the sample covariance matrix becomes singular, making it impractical to compute the Hotelling $T^2$ statistic. To address the high-dimensional mean testing problem, \citet{bai1996effect} and \citet{chen2010tests} proposed $L_2$ type statistics that do not require the estimation of the inverse of the covariance matrix. \citet{tony2014two} proposed $L_\infty$ type statistics that can be more sensitive to extreme values in the mean vector. \cite{yu2023power} proposed the power-enhanced test of high-dimensional mean vectors and expanded the high-power regions of \citet{chen2010tests} and \citet{tony2014two} to a wider alternative space. \cite{li2024power} developed power-enhanced mean tests for high-dimensional compositional data. For a comprehensive review of testing high-dimensional means, readers may refer to \citet{huang2022overview}.

Although statistical inference for high-dimensional means has received significant focus, research concerning such inference for multidimensional arrays—like matrices or tensors—remains comparatively sparse. Unlike vectors, matrices and tensors exhibit distinct characteristics, including rank, singular values, and singular vectors, which frequently capture the interest of researchers. Among these attributes, the rank of a matrix stands out as a particularly important feature, offering essential insights into the matrix's structure and properties.

Previous studies by \citet{cragg1993testing} and \citet{cragg1997inferring} proposed a minimum discrepancy type test for testing coefficient matrices of instrumental variables models. \citet{robin2000tests} proposed tests by estimating the eigenvalues of quadratic forms of the matrix, which can avoid the need to estimate the covariance matrix. However, all of these studies rely on restrictive model assumptions and assume that the dimension of the matrix is fixed. Furthermore, they all focus on matrix rank tests for parameter estimates rather than data matrices. In this paper, we propose a new test statistic that is valid for testing the rank of high-dimensional matrix-valued data.

Matrix rank testing finds a compelling application in object detection, a prevalent issue within computer vision aimed at pinpointing specific objects within images. For example, the development of a visual sensor to oversee security measures in mines could significantly mitigate losses attributed to common accidents. A critical function of such technology would be to detect unusual collapses by analyzing sequences of images obtained from ongoing surveillance footage within a designated area. Additional applications extend to monitoring landslides, fire outbreaks, and suspicious activities in particular locales. From a statistical standpoint, each image can be interpreted as a tensor or matrix permeated with random noise. The rank test we propose, designed for high-dimensional matrix-valued data, is adept at identifying objects or features that deviate from their surroundings. Consequently, this enhances the capability to spot and respond to suspicious incidents effectively, offering a substantial benefit to areas like security monitoring and disaster prevention.


We provide a theoretical demonstration that the minimum discrepancy test, effective for low-dimensional matrices, loses efficacy in high-dimensional contexts. To address this challenge, we introduce a novel statistic specifically designed for rank inference in high-dimensional matrix-valued data. Our approach represents a significant advancement in tackling the matrix rank testing problem under high-dimensional settings.

We rigorously establish the theoretical properties of the oracle version of our proposed statistic, demonstrating its asymptotic normality and analyzing its asymptotic power function. To mitigate accumulated empirical error, we develop a novel sparse Singular Value Decomposition (SVD) technique and provide a comprehensive theoretical framework for its use. Building on this sparse SVD approach, we construct a sample version of the statistic and examine its theoretical advantages. Simulation studies and two practical applications involving surveillance video data illustrate the effectiveness of our method. Importantly, this work pioneers the application of matrix rank testing in the context of object detection, representing a notable contribution to the field.

The rest of our paper is structured as follows. Section \ref{Sec2} presents the preliminaries including the minimum discrepancy type test proposed by \citet{cragg1997inferring} and useful notations. In Section \ref{sec4.34/4}, we first show the limitations of the minimum discrepancy test in high-dimensional contexts and then introduce our proposed methods. Section \ref{sec4.53/30} is dedicated to simulation studies, presenting empirical proof of the effectiveness of our proposed statistics. Section \ref{4.620220502} applies our proposed methods to two video surveillance datasets, showcasing the practical applications and potential of our approach. Section \ref{sec804092024} includes some concluding remarks. The proofs and technical details are presented in the appendix.

\section{Preliminaries}\label{Sec2}

\subsection{Background}
We now present a formal introduction to the model considered in this paper. Suppose  $\{\mathcal{X}_i, i=1,\cdots, n$\} is  a sample from the following model:
\begin{equation}\label{eqn431}
\mathcal{X}= \Pi^0+ A\mathcal{Z}B,
\end{equation}
where each $\mathcal{X}_i$ is a random matrix and $\mathcal{Z}$ is the $q\times p$-dimensional random matrix with independent entries \{$z_{ij}$\} with mean 0 and variance 1. Without loss of generality, we assume $q\geq p$. $A$ is a $q\times q $ matrix, $B$ is a $p\times p$ matrix and $\Pi^0$ is a $q\times p$ low rank matrix.   Let $\tvec(C_{s\times t})$ be formed by stacking the columns of a matrix $C\in\mathbb{R}^{s\times t}$ into a vector in $\mathbb{R}^{st\times 1}$. Then $\tvec(\mathcal{X})$ has mean $\tvec(\Pi^0)$ and covariance matrix $\Sigma=\Sigma_1 \otimes \Sigma_2$, where $\Sigma_1= B^TB$ and $\Sigma_2= AA^\top$. Let $\rho(\cdot)$  denote the rank of a matrix  and  $\Theta_K=\{\Pi: \rho(\Pi)\leq K \}$. In this work, it is of interest to test the hypothesis: 
\begin{equation}\label{eqn4.34/4}
H_0: \Pi^0 \in \Theta_{K} \quad \mbox{versus}\quad H_1: \Pi^0\not\in \Theta_{K},
\end{equation}
based on the sample $\{\mathcal{X}_i, i=1,\cdots, n \}$.

\citet{cragg1997inferring} proposed a minimum discrepancy type test to test the rank of the parameter matrix of the instrumental variable model. Although the statistic was proposed under a different setting, it can also be used for testing problem (\ref{eqn4.34/4}) with a small modification, when $q$ and $p$ are of fixed dimension. They proposed a quadratic form statistic with an asymptotic $\chi^2$ distribution. The statistic $n\hat{C}(\bar{\mX}, K)$ can be considered,
where
\begin{eqnarray}\label{eqn4.3.3}
\hat{C}(\bar{\mX}, K)=\min _{\Pi \in \Theta_{K},\tvec(\Pi)=\pi}(\hat{\pi}-\pi)^{\top} \hat{\Sigma}^{-1}(\hat{\pi}-\pi)=\min _{\Pi \in \Theta_{K},\tvec(\Pi)=\pi}||\hat{\Sigma}^{-1/2}(\hat{\pi}-\pi)||^2_2.
\end{eqnarray}
The asymptotic properties of the statistic proposed in the minimum discrepancy test are given in Theorem \ref{theorem44}, which is a restatement of Theorem 1 of \citet{cragg1997inferring}.

\begin{theorem}\label{theorem44} [Theorem 1 of \citet{cragg1997inferring}]
Assume that there is an estimate of $\Sigma$,  such that $\hat{\Sigma}\overset{a.s.}{\longrightarrow} \Sigma$. When $\rho(\Pi^0)=R$, we have (i)
$n \hat{C}(\bar{\mX}, K)\stackrel{d}{\longrightarrow}\chi^2_{(p-R)(q-R)}$; (ii) $n\hat{C}(\bar{\mX}, K)\rightarrow\infty $ for $K<R$; (iii) $n\hat{C}(\bar{\mX}, K)\leq n\hat{C}(\bar{\mX}, R)$ for $K>R$.
\end{theorem}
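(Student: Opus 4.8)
The plan is to convert the constrained quadratic minimization in (\ref{eqn4.3.3}) into a best low-rank approximation, apply the Eckart--Young--Mirsky theorem (the commented Lemma~1), and then read the three conclusions off the singular values of a single perturbed matrix. First I would exploit the Kronecker structure $\Sigma=\Sigma_1\otimes\Sigma_2$. Since $\Sigma^{-1/2}=\Sigma_1^{-1/2}\otimes\Sigma_2^{-1/2}$ and $\tvec(PXQ)=(Q^{\top}\otimes P)\tvec(X)$, the whitening map $\Pi\mapsto\Sigma_2^{-1/2}\Pi\Sigma_1^{-1/2}=\tvec^{-1}\!\big(\Sigma^{-1/2}\tvec(\Pi)\big)$ is a rank-preserving bijection on $\RR^{q\times p}$. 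Setting $\hat\Gamma:=\Sigma_2^{-1/2}\bar{\mX}\Sigma_1^{-1/2}$, this identity turns the minimization into $\min_{\rho(\Gamma)\le K}\|\hat\Gamma-\Gamma\|_F^2=\sum_{i=K+1}^{p}\hat\sigma_i^2$ by Eckart--Young--Mirsky, where $\hat\sigma_1\ge\cdots\ge\hat\sigma_p\ge 0$ are the singular values of $\hat\Gamma$ (recall $q\ge p$). Because $\hat\Sigma\overset{a.s.}{\longrightarrow}\Sigma$, using $\hat\Sigma$ rather than $\Sigma$ in the objective changes its value at $\tvec(\Pi)$ by $(\hat\pi-\tvec(\Pi))^{\top}(\hat\Sigma^{-1}-\Sigma^{-1})(\hat\pi-\tvec(\Pi))$, which near the optimizer I expect to be $o_p(1/n)$, so the substitution perturbs $n\hat C(\bar{\mX},K)$ by only $o_p(1)$.

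Part (iii) is then immediate: for $K>R$ the representation $\hat C(\bar{\mX},K)=\sum_{i=K+1}^{p}\hat\sigma_i^2$ simply omits the nonnegative terms $\hat\sigma_{R+1}^2,\dots,\hat\sigma_{K}^2$ kept in $\hat C(\bar{\mX},R)$, so $\hat C(\bar{\mX},K)\le \hat C(\bar{\mX},R)$; equivalently, $\Theta_R\subseteq\Theta_K$ enlarges the feasible set and lowers the minimum, and this holds exactly for every fixed positive definite $\hat\Sigma$. For part (ii), the law of large numbers gives $\bar{\mX}\to\Pi^0$ and $\hat\Sigma\to\Sigma$ almost surely, so by continuity of singular values $\hat\sigma_i\to\sigma_i^0$, the singular values of $\Gamma^0:=\Sigma_2^{-1/2}\Pi^0\Sigma_1^{-1/2}$. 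As $\rho(\Gamma^0)=\rho(\Pi^0)=R$, when $K<R$ the limit $\hat C(\bar{\mX},K)\to\sum_{i=K+1}^{R}(\sigma_i^0)^2$ is a strictly positive constant, and hence $n\hat C(\bar{\mX},K)\to\infty$.

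The substance is part (i), the case $K=R$ (consistent with the degrees of freedom $(p-R)(q-R)$). Writing $\bar{\mX}=\Pi^0+A\bar{\mZ}B$, I would apply a central limit theorem to $\tvec(\bar{\mZ})$ and use $\tvec(\hat\Gamma-\Gamma^0)=(\Sigma_1^{-1/2}B^{\top}\otimes\Sigma_2^{-1/2}A)\tvec(\bar{\mZ})$ to obtain $\sqrt n\,\tvec(\hat\Gamma-\Gamma^0)\stackrel{d}{\longrightarrow}N(0,I_{pq})$; the limiting covariance collapses to the identity precisely because $\Sigma_1^{-1/2}B^{\top}B\Sigma_1^{-1/2}=I_p$ and $\Sigma_2^{-1/2}AA^{\top}\Sigma_2^{-1/2}=I_q$. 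Thus $\hat\Gamma=\Gamma^0+E_n/\sqrt n$ where $E_n$ converges in distribution to a $q\times p$ matrix $E$ with i.i.d.\ standard normal entries. The decisive point is that the signal $\Gamma^0$ has exact rank $R$, so it contributes nothing to the bottom $p-R$ singular values; those are governed entirely by the noise projected onto the null spaces. Letting the columns of $U_2$ and $V_2$ span the left and right null spaces of $\Gamma^0$, a singular-value perturbation argument yields $\sqrt n\,\hat\sigma_{R+j}=s_j\!\big(U_2^{\top}E_n V_2\big)+o_p(1)$ for $j=1,\dots,p-R$, where $s_j$ denotes the $j$-th singular value. Since $U_2^{\top}E_n V_2\stackrel{d}{\longrightarrow}U_2^{\top}E V_2$, a $(q-R)\times(p-R)$ matrix of i.i.d.\ standard normals, I conclude $n\hat C(\bar{\mX},R)=\sum_{j=1}^{p-R}\big(\sqrt n\,\hat\sigma_{R+j}\big)^2\stackrel{d}{\longrightarrow}\|U_2^{\top}E V_2\|_F^2\sim\chi^2_{(p-R)(q-R)}$.

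I expect the main obstacle to be the perturbation step: rigorously separating the bottom $p-R$ singular values and showing that the coupling between the rank-$R$ signal block and the noise block enters only at order $O_p(1/\sqrt n)$, hence is negligible relative to the $O_p(1/\sqrt n)$ magnitude of the small singular values themselves. The second delicate point is the replacement of $\Sigma$ by $\hat\Sigma$ under only almost-sure consistency, with no rate. This is dissolved by the structural fact used above: the whitened signal is an invertible two-sided transform of $\Pi^0$ and so has exact rank $R$ whether one whitens with $\Sigma$ or with $\hat\Sigma$, so the bottom singular values see no signal in either case; only convergence of the null-space projectors and Slutsky's theorem are then needed, together with the bound that the quadratic-form discrepancy from $\hat\Sigma^{-1}-\Sigma^{-1}$ is $o_p(1/n)$ near the optimizer rather than a rate on $\hat\Sigma-\Sigma$.
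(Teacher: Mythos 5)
Your argument is essentially sound, but it is worth being clear that the paper does not prove this theorem at all: it is presented as a restatement of Theorem 1 of \citet{cragg1997inferring}, whose original proof runs through the explicit rank parametrization $\Pi=(\Pi_2V,\Pi_2)$ and standard minimum-distance asymptotics --- the minimized quadratic form converges to a $\chi^2$ whose degrees of freedom equal $pq$ minus the dimension of $\mu=(\tvec(V)^\top,\pi_2^\top)^\top$, i.e.\ $pq-(p-R)K-qK=(p-R)(q-R)$ at $K=R$. That is exactly the $B(\mu)$ Jacobian machinery the paper itself sets up in Section 3.1 for Theorem 2, and it works for an arbitrary positive definite $\Sigma$. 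Your route is genuinely different: you exploit the Kronecker structure $\Sigma=\Sigma_1\otimes\Sigma_2$ of model (2.1) to whiten, reduce the constrained minimization to Eckart--Young--Mirsky, and read the limit off the bottom singular values of $\hat\Gamma=\Gamma^0+E_n/\sqrt n$. This is closer in spirit to the eigenvalue-based tests of \citet{robin2000tests} and to the paper's own representation $T_K=n\sum_{k>K}\hat\sigma_k^2$; it buys a transparent proof of (iii) (exact set inclusion $\Theta_R\subseteq\Theta_K$, valid for any weight matrix) and of (ii), and it makes the $(q-R)\times(p-R)$ Gaussian block $U_2^\top EV_2$ visible, but it proves the theorem only for the Kronecker-structured model rather than for the general $\Sigma$ covered by Cragg--Donald.

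Two points need repair or expansion. First, your description of the key obstacle is internally inconsistent: if the coupling between the signal block and the noise block affected the small singular values at order $O_p(1/\sqrt n)$ it would \emph{not} be negligible relative to the $O_p(1/\sqrt n)$ singular values themselves. The correct claim, which the block decomposition $[U_1\,U_2]^\top\hat\Gamma[V_1\,V_2]$ delivers, is that the off-diagonal blocks are $O_p(1/\sqrt n)$ while their effect on the bottom $p-R$ singular values enters through a Schur-complement-type correction $\hat\Gamma_{21}\hat\Gamma_{11}^{-1}\hat\Gamma_{12}=O_p(1/n)=o_p(1/\sqrt n)$, since $\hat\Gamma_{11}\to\Lambda^0_R$ is bounded away from singularity. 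This perturbation lemma is the real analytic content of part (i) and should be stated and proved, not just named. Second, the replacement of $\Sigma$ by $\hat\Sigma$ should be written as a two-sided sandwich: bounding $\hat C$ above by evaluating the $\hat\Sigma$-criterion at the $\Sigma$-optimizer and below via the reverse evaluation, using $\|\hat\pi-\pi^*\|^2=O_p(1/n)$ at both optimizers (which itself requires noting that the eigenvalues of $\hat\Sigma^{-1}$ are eventually bounded away from $0$ and $\infty$). Your $o_p(1)\cdot O_p(1/n)$ bound is correct and indeed needs no rate on $\hat\Sigma-\Sigma$, but as written it is only applied at one of the two optimizers.
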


Theorem \ref{theorem44}(i) provides the asymptotic distribution of (\ref{eqn4.3.3}) under the null hypothesis. Furthermore, to obtain a good estimator of the rank of $\Pi^0$, Theorem \ref{theorem44}(ii) and \ref{theorem44}(iii) motivate us to consider sequential hypothesis testing. We could begin with a test with the null hypothesis $H_0: \Pi^0\in\Theta_0$ and estimate $R$ as the smallest value of $K$ for which the test does not reject $H_0: \Pi^0\in\Theta_K$. When the dimension is fixed, the sample covariance matrix, denoted by $\hat{\Sigma}$, can converge to $\Sigma$ almost surely.

\subsection{Notations}

Before proceeding, we define some useful notations throughout this paper. For a matrix $C\in \mathbb{R}^{s\times t}$, let $C_{j,:}$ be the vector of the $j$-th row of $C$ and $C_{:,j}$ be the vector of the $j$-th column of $C$.  We say $C=diag(c_1,\cdots,c_m)$ when $C\in\mathbb{R}^{m\times m}$ is a diagonal matrix with $C_{ii}=c_i$ and 0 for other elements of $C$ and $I_m$ is an identity matrix with size $m$.  For a squared matrix $C\in \mathbb{R}^{s\times s}$, $\tr(C)=\sum_{i=1}^sC_{ii}$. The inner product of two matrices $C,D \in\mathbb{R}^{s\times t}$ (or vectors when $t=1$) , is $\left<C,D\right>=\tr(CD^\top)$.   Let $||C||_F=\sqrt{\left<C,C\right>}$ be the Frobenius norm of $C$. $\sigma_i(C)$  means the $i$-th largest singular value of $C$. Specifically, let $\bar{\sigma}(C) $ or $||C||_2$ be the largest singular value of $C$ (or Euclidan norm when $C$ is a vector) and $\underline\sigma(C)$ be the smallest non-zero singular value of $C$. Let $\mathbb{O}^{s\times t}=\{U\in \mathbb{R}^{s\times t}:U^\top U=I_t\}$ be the collection of column orthogonal matrices. Let $\mathcal{F}_{q,p,K}=\{(U,V,\Lambda):U\in \mathbb{O}^{q\times K},V\in \mathbb{O}^{p\times K},\Lambda=diag(\sigma_1,\cdots,\sigma_K), \sigma_1\geq\cdots\geq\sigma_K\geq 0\}$.  Let the true rank of $\Pi^0$ be $R$ so that  singular value decomposition (SVD) of $\Pi^0$ is: $\Pi^0=U^0\Lambda^0 V^{0^\top}=\sum_{k=1}^R\sigma_k\mu_k\nu_k^\top$, where $(U^0,\Lambda^0,V^0)\in\mathcal{F}_{q,p,R}$ and $\sigma_1\geq\cdots\geq\sigma_R>0$. The choice of $(U^0,\Lambda^0,V^0)$ is unique when there are no repeated singular values but is not necessary to be unique if there are repeated singular values.  Let $\bar{\mX}=\frac{1}{n}\sum^n_{i=1}\mathcal{X}_i$, the sample mean of $\mX$ and similarly $\bar{\mZ}=\frac{1}{n}\sum^n_{i=1}\mathcal{Z}_i$.    Let $\hat{\pi}=\tvec(\bar{\mX})$ and $\Pi^0=\tvec(\Pi^0)$.
 $a\gg b$ means $\lim_{n\rightarrow \infty} a/b=\infty$. $[k]$ stands for $\{1,2,\cdots,k\}$ for an integer $k>0$. For a subgaussion random variable $S$, we define the subgaussian norm of $S$, $||S||_{\psi_2}$, as the smallest 
$K$ such that $E[exp(\lambda S)]\leq exp(\dfrac{\lambda^2K^2}{2})$ for all $\lambda$ holds. $X_n\stackrel{d}{\longrightarrow} X$ denotes $X_n$ converges to $X$ in distribution.
The derivative of a vector function (a vector whose components are functions) $y=(y_1,y_2\cdots,y_m)^T$ with respect to an input vector $x=(x_1,x_2\cdots,x_n)^T$, is written (in numerator layout notation) as
$\frac{\partial y}{\partial x}=\left[\begin{array}{cccc}
\frac{\partial y_1}{\partial x_1} & \frac{\partial y_1}{\partial x_2} & \cdots & \frac{\partial y_1}{\partial x_n} \\
\frac{\partial y_2}{\partial x_1} & \frac{\partial y_2}{\partial x_2} & \cdots & \frac{\partial y_2}{\partial x_n} \\
\vdots & \vdots & \ddots & \vdots \\
\frac{\partial y_m}{\partial x_1} & \frac{\partial y_m}{\partial x_2} & \cdots & \frac{\partial y_m}{\partial x_n}
\end{array}\right].$

\section{Methodology}\label{sec4.34/4}

After showing the limitations of the minimum discrepancy test under the high-dimensional setting in Section \ref{sec1.2}, Section \ref{sec4.413/30} introduces a novel test statistic designed for matrix rank testing in high-dimensional matrices, along with an exploration of its oracle properties. In Section \ref{sec4.4.23/30}, we unveil a new sparse SVD methodology aimed at achieving precise estimations of singular vectors for the mean matrix. Section \ref{sec4.4.33.30} discusses the development of plug-in statistics derived from this new singular vector estimation technique, highlighting their significant properties. 

\subsection{Minimum Discrepancy Test with Diverging $q$ and $p$}\label{sec1.2}

Problems arise for the minimum discrepancy test when $q$ and $p$ diverge with $n$. To simplify the discussion, let us consider the case where $\Sigma$ is proportional to an identity matrix, $\Sigma=\sigma_0 I_{qp\times qp}$. Without loss of generality, we assume $\sigma_0=1$, so that we do not need to estimate $\Sigma$. In this section, we demonstrate that the minimum discrepancy test still works when $\max\{q, p\}=o(n^\frac{1}{4})$, but it fails when $q$ and $p$ are larger. 

Let $T=\min\{q,p\}$ and  $\bar\mX=\sum_{k=1}^{T} \hat{\sigma}_k \hat{\mu}_{k} \hat{\nu}_{k}^\top$ denote the   SVD for $\bar\mX$. 
Consider the following test statistic:
\begin{eqnarray}
T_K=\min _{\Pi \in \Theta_K}n\|\bar\mX-\Pi\|^2_F=n\sum_{k=K+1}^{T}\hat{\sigma}_k^2.
\end{eqnarray}
Let $\pi$ denote $vec(\Pi)$, $z$ denote $vec(\mathcal{Z})$.
Following the notations of \citet{cragg1997inferring}, partition (after any needed re-ordering of columns) $\Pi$ into $\Pi_{1}$  and
$\Pi_{2}$, of $p-K$ and $K$ columns. Respectively, we can write
$\Pi=(\Pi_{1},\Pi_{2})$ with $\Pi_1=\Pi_{2} V$, $V$ is of $K\times (p-K)$. Let $\pi_2=\tvec(\Pi_2)$ and $\mu=\left(\begin{array}{c}   
    \tvec(V) \\  
    \pi_2  
  \end{array}\right)$, then any $\pi$ is a function of $\mu$, $\pi=\pi(\mu)$.  Let
\begin{eqnarray}
B(\mu)=\frac{\partial}{\partial \mu} \pi(\mu)=\left(\begin{array}{cc}
I_{(p-k)\times(p-k)} \otimes \Pi_{2} & V^{\top} \otimes I_{q\times q} \\
0 & I_{qk\times qk}
\end{array}\right)
\end{eqnarray}
be the Jacobian matrix of $\pi(\mu)$. Let $\hat{\mu}=\text{arg}\min _{\Pi \in \Theta_{K}} \|\bar\mX-\Pi\|_{F}^{2}= \text{arg}\min_\mu(\hat{\pi}-\pi(\mu))^\top(\hat{\pi}-\pi(\mu))$. 
Similarly, we can write
$\Pi^0=(\Pi_{10},\Pi_{20})$ with $\Pi_{10}=\Pi_{20} V_0$, Let $\pi_{20}=\tvec(\Pi_{20})$ and $\mu_0=\left(\begin{array}{c}   
    \tvec(V_0) \\  
    \pi_{20}  
  \end{array}\right)$.
We impose the following assumptions for deriving the asymptotic distribution of the statistics in the minimum discrepancy test.

\begin{enumerate}[label=A\arabic*.,ref=A\arabic*,series=myexample]
\item\label{ass13/30} Suppose  $\mathcal{Z}_1,\mathcal{Z}_2,...,\mathcal{Z}_n$ are i.i.d. $q\times p$ dimensional random matrices with subgaussion elements and uniform subgaussion norm $K$ such that $E(\mathcal{Z}_1)=0_{q\times p}$, $Var\{vec(\mathcal{Z}^T_1)\}=I_q\otimes I_p$. Write $Z_1=vec(\mathcal{Z}_1)=(z_1,z_2,...,z_{pq})^T$. We assume $E(z_i^4)=3+\Delta,$ and 
\begin{equation}
E(z_{i1}^{\alpha_1}z_{i2}^{\alpha_2}\cdots z_{iq}^{\alpha_q})=E(z_{i1}^{\alpha_1})E(z_{i2}^{\alpha_2})\cdots E(z_{iq}^{\alpha_q})
\end{equation}
for a positive integer $t$ such that $\sum\limits_{l=1}^{t}\alpha_l\leq 8$ and $i_1\neq i_2\neq\cdots\neq i_q$.
\item \label{ass222}$\underline{\sigma}(B(\mu_0))$ is bounded below.
\item \label{A_3} All elements of $\Pi_{20}$ and $V_0$ are uniformly bounded above. 
\end{enumerate}

\emph{Remark.} Assumption A1 is advocated in \citet{bai1996effect} for testing high-dimensional means and in \citet{li2012two} for testing covariance matrices. It does not assume any specific parametric distribution of elements in the random matrix. Assumptions A2 and A3 are required to exclude extreme singular values in the matrix and keep singular values at the same level when the dimension grows.

\begin{theorem}\label{theorem73/30}
Under Assumptions A1- A3 and $max\{q, p\}=o(n^\frac{1}{4})$, suppose $\rho(\Pi^0)=R$, then
$$\dfrac{T_{R}-(q-R)(p-R)}{2(q-R)(p-R)}\stackrel{d}{\longrightarrow} N(0,1).$$
\end{theorem}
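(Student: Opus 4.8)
The plan is to reduce $T_R$ to a quadratic form in the averaged noise $\bar\mZ$ and then apply a central limit theorem of the Chen--Qin type to that quadratic form. Since $\Sigma=I$ we have $\bar\mX=\Pi^0+\bar\mZ$, and by the Eckart--Young theorem the minimizer defining $T_R$ is the best rank-$R$ approximation $\bar\mX_R=\hat U\hat U^\top\bar\mX\,\hat V\hat V^\top$, where $\hat U,\hat V$ collect the top-$R$ left and right singular vectors of $\bar\mX$. Using the identity $\bar\mX-\bar\mX_R=(I-\hat U\hat U^\top)\bar\mX(I-\hat V\hat V^\top)$, I would take as exact starting point $T_R=n\,\|(I-\hat U\hat U^\top)\bar\mX(I-\hat V\hat V^\top)\|_F^2$; everything after this is approximation and CLT.

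First I would replace the estimated projections by the population ones. Let $U^0_\perp\in\mathbb{O}^{q\times(q-R)}$ and $V^0_\perp\in\mathbb{O}^{p\times(p-R)}$ span the orthogonal complements of the column and row spaces of $\Pi^0$. Because $U_\perp^{0\top}\Pi^0=0$ and $\Pi^0V^0_\perp=0$, the population-projected residual collapses to $(I-U^0U^{0\top})\bar\mX(I-V^0V^{0\top})=U^0_\perp(U_\perp^{0\top}\bar\mZ V^0_\perp)V_\perp^{0\top}$, whose Frobenius norm equals $\|U_\perp^{0\top}\bar\mZ V^0_\perp\|_F$. To control the swap I would invoke a random-matrix bound $\|\bar\mZ\|_2=O_p(\sqrt{(q+p)/n})$ (valid for subgaussian entries under A1) together with Wedin's $\sin\Theta$ theorem; since A2--A3 keep the nonzero singular values of $\Pi^0$ bounded away from $0$, the subspace errors satisfy $\|\hat U\hat U^\top-U^0U^{0\top}\|_2,\ \|\hat V\hat V^\top-V^0V^{0\top}\|_2=O_p(\sqrt{(q+p)/n})$. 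Expanding the residual into the population term plus cross terms, and using $\Pi^0(I-V^0V^{0\top})=0$ and $(I-U^0U^{0\top})\Pi^0=0$ to kill the signal inside the cross terms, I expect $T_R=n\|U_\perp^{0\top}\bar\mZ V^0_\perp\|_F^2+O_p\!\big(qp\sqrt{q+p}/\sqrt n\,\big)$.

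Next I would identify the leading term as a high-dimensional one-sample statistic. Writing $y_i=\tvec(U_\perp^{0\top}\mathcal Z_i V^0_\perp)\in\RR^{d}$ with $d=(q-R)(p-R)$, the $y_i$ are i.i.d. with mean $0$ and covariance $I_d$, and $n\|U_\perp^{0\top}\bar\mZ V^0_\perp\|_F^2=\tfrac1n\|\sum_i y_i\|_2^2=\tfrac1n\sum_i\|y_i\|_2^2+\tfrac1n\sum_{i\neq i'}y_i^\top y_{i'}$. The diagonal part equals $d$ in mean with standard deviation $O(\sqrt{d/n})=o(\sqrt d)$, so it contributes only the centering $(q-R)(p-R)$, while the fluctuations come entirely from the degenerate $U$-statistic $\tfrac1n\sum_{i\neq i'}y_i^\top y_{i'}$, whose variance is $2d(1+o(1))$ because $E[(y_1^\top y_2)^2]=\tr(I_d^2)=d$. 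Crucially the excess-kurtosis parameter $\Delta$ enters only lower-order terms of size $O(qp/n)$ and does not survive this averaging, which is why the limiting scale is $\sqrt{2(q-R)(p-R)}$. I would then represent the $U$-statistic as a martingale-difference sum $\sum_{i\ge2}\tfrac2n y_i^\top(\sum_{i'<i}y_{i'})$ and apply the martingale CLT (Hall--Heyde), verifying the conditional-variance convergence and the Lyapunov condition exactly as in \citet{chen2010tests}; the product-moment conditions in A1 (moments through order $8$) are what make these verifications go through. Slutsky's theorem combined with Step~2 then yields the stated convergence.

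The main obstacle is the perturbation step: the centering $(q-R)(p-R)\asymp qp$ and the scale $\sqrt{2(q-R)(p-R)}\asymp\sqrt{qp}$ are both large, so I must show the replacement error is $o_p(\sqrt{qp})$ after multiplying by $n$. Tracking the dominant cross terms $(I-U^0U^{0\top})\bar\mZ\,(\hat V\hat V^\top-V^0V^{0\top})$ and its transpose gives an error of order $qp\sqrt{q+p}/\sqrt n$, which is $o_p(\sqrt{qp})$ precisely when $qp(q+p)=o(n)$; the hypothesis $\max\{q,p\}=o(n^{1/4})$ delivers $qp(q+p)=o(n^{3/4})$, so this is exactly the regime in which the argument closes and also pinpoints where a larger dimension would cause it to break.
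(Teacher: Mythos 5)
Your argument reaches the correct limiting object but gets there by a genuinely different route from the paper. The paper's proof is organized around the Cragg--Donald parametrization that its assumptions are written for: it represents a rank-$R$ matrix as $\Pi=(\Pi_2V,\Pi_2)$, Taylor-expands $\pi(\hat\mu)-\pi(\mu_0)\approx B(\mu_0)(\hat\mu-\mu_0)$, and reduces $T_R$ to the quadratic form $n\,\hat z^\top(I-P_{B})\hat z$ with $P_B$ the projection onto the column space of the Jacobian $B(\mu_0)$; Assumptions A2--A3 enter directly as control on $\underline\sigma(B(\mu_0))$ and on the entries of $\Pi_{20},V_0$. You instead linearize the minimization via Eckart--Young and Wedin's $\sin\Theta$ theorem, projecting the noise onto $U^0_\perp$ and $V^0_\perp$. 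These are the same reduction in disguise: the orthogonal complement of the column space of $B(\mu_0)$ is exactly the normal space $\{U^0_\perp W V_\perp^{0\top}\}$ of the rank-$R$ variety at $\Pi^0$, so both roads end at $n\|U_\perp^{0\top}\bar\mZ V^0_\perp\|_F^2$ and the same Chen--Qin--type martingale CLT. Your version is cleaner and coordinate-free and makes the role of the singular-value gap explicit; the paper's version is the one that its stated assumptions plug into without translation.

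Two places where your perturbation step needs more care. First, the signal is \emph{not} killed in both cross terms: $(\hat U\hat U^\top-U^0U^{0\top})\bar\mX(I-\hat V\hat V^\top)$ contains $(\hat U\hat U^\top-U^0U^{0\top})\,\Pi^0\,(V^0V^{0\top}-\hat V\hat V^\top)$, which is second order in the subspace errors but carries a factor $\|\Pi^0\|_2=\sigma_1(\Pi^0)$; under A3 this can grow like $\sqrt{qp}$, so it must be included in the error budget. It still closes --- the resulting contribution to $T_R$ is $O_p\bigl(\sqrt{d}\,\sigma_1(q+p)/\sqrt n\bigr)=o_p(\sqrt d)$ when $\max\{q,p\}=o(n^{1/4})$ --- but the claim that $(I-U^0U^{0\top})\Pi^0=0$ disposes of all cross terms is not literally correct. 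Second, to invoke Wedin you need $\sigma_R(\Pi^0)$ bounded away from zero; this has to be extracted from A2 (via $\underline\sigma(B(\mu_0))\Rightarrow\underline\sigma(\Pi_{20})$ bounded below), which you assert but do not derive. Finally, note that your argument (correctly) produces the normalization $\sqrt{2(q-R)(p-R)}$, not the $2(q-R)(p-R)$ appearing in the displayed statement; the latter would force convergence to $0$, so the square root is evidently intended.
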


Theorem \ref{theorem73/30} demonstrates that the statistic in the minimum discrepancy test is asymptotically normal when the size of the matrix is moderate, i.e., $o(n^{\frac{1}{4}})$. However, 
the statistic cannot effectively control the size or achieve sufficient power when the size of the matrix is comparable to the sample size $n$. This issue is another manifestation of the curse of dimensionality, as errors accumulate when the size of the matrix is high.

\subsection{The Oracle Method and Property}\label{sec4.413/30}
In this section, we propose a new test statistic that can accommodate cases where the matrix dimension, $p$ and $q$, can be comparable with the sample size $n$, without assuming that $A$ and $B$ are identity matrices.

Given the SVD of $\Pi^0=\sum_{k=1}^R\sigma_k\mu_k\nu_k^\top$, the new statistic is based on the observation that under the null hypothesis, $\operatorname{tr}(\Pi^0\Pi^{0^T})=\sum_{k=1}^{K}\sigma_k^2$, but under alternative, $\operatorname{tr}(\Pi^0\Pi^{0^T})>\sum_{k=1}^{K}\sigma_k^2$. It motivates us to estimate $\operatorname{tr}(\Pi^0\Pi^{0^T})$ and $\sum_{k=1}^{K}\sigma_k^2$ respectively. To estimate $ \operatorname{tr}(\Pi^0\Pi^{0^T})$, 
noticing that $E\mathcal{X}_1\mathcal{X}_2^T=\Pi^0\Pi^{0^T}:=\mM_0$,  a U-statistic is considered: $$U_n=\tr(\frac{\sum_{i\neq j} \mathcal{X}_i\mathcal{X}_j^T}{n(n-1)}):=\tr(\hat\mM_0),$$ 
To estimate $\sigma_k^2$, noticing that $E\mu_k^T\mathcal{X}_1V^0V^{0^\top}\mathcal{X}_2^T\mu_k =\sigma_k^2$ with each $k$, 
another U-statistic is considered: $$T_k=\frac{1}{n(n-1)}\sum_{i\neq j}\mu_k^T\mathcal{X}_iV^0V^{0^\top}\mathcal{X}_j^T\mu_k.$$  
Consider the following test statistic:
\begin{eqnarray}
V_{n}=U_n-\sum_{k=1}^{K} T_k.
\end{eqnarray}
\emph{Remark.} An  observation is that $\sigma_k^2$ can be expressed as $E\mu_k^T\mathcal{X}_1\mathcal{X}_2^T\mu_k$. However, when estimating $\sigma_k^2$, one might question why we use $E\mu_k^T\mathcal{X}_1V^0V^{0^\top}\mathcal{X}_2^T\mu_k$ to define $T_k$, rather than the simpler form $E\mu_k^T\mathcal{X}_1\mathcal{X}_2^T\mu_k$. Although the simplified version has the same desirable oracle properties as the current formulation, it is important to note that this equivalence does not hold for the sample version statistics, as discussed in Section \ref{sec4.4.33.30}. This is due to the crucial assumption of sparsity for both $V^0$ and $U^0$, which is essential for the effectiveness of our sample version statistics.

Under an oracle setting where $U^0, V^0, \Sigma_1$, and $\Sigma_2$ are known, we can obtain the asymptotic normality of the proposed test statistic under certain regularity conditions.

\begin{enumerate}[resume*=myexample]
\item\label{ass4}As $n,p,q \rightarrow
\infty, pq=O(n^4),  \tr(\Sigma^4) =o(\operatorname{tr}^2(\Sigma^2)),\|\Sigma\|_2^2=o( \operatorname{tr}(\Sigma^2)).$
\end{enumerate}

\begin{theorem}\label{thm333}
Suppose assumptions \ref{ass13/30} and \ref{ass4} hold. Under null hypothesis , we have
\begin{eqnarray}
G_n:=\dfrac{U_n-\sum_{k=1}^{K}T_k}{\sqrt{ \dfrac{2 \operatorname{tr}(\Sigma^2)}{n^2}}}\stackrel{d}{\longrightarrow}N(0,1), \text{as } n\rightarrow \infty.
\end{eqnarray}
When $R>K$ and $\dfrac{n\sum_{i=K+1}^R\sigma^2_{i}}{\sqrt{2\tr(\Sigma^2)}}\rightarrow \delta$ for a constant $\delta$, we have \[\sup_x|P(G_n\leq x)-P(N(\delta,1)\leq x)|\rightarrow 0,\] 
 and $N(\delta,1)$ is normal distribution with mean of $\delta$ and variance of 1.
\end{theorem}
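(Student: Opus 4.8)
\emph{Proof plan (alternative case $R>K$).} The plan is to split $V_n$ into a deterministic signal, a negligible linear term, and a degenerate $U$-statistic that carries the Gaussian limit, and then to reuse the central limit argument that also drives the null case. Writing $\mathcal Y_i=\mathcal X_i-\Pi^0=A\mathcal Z_iB$ (so $E\mathcal Y_i=0$) and setting $P_U=\sum_{k=1}^K\mu_k\mu_k^\top$, $P_V=V^0V^{0\top}$, I would first expand $\mathcal X_i=\Pi^0+\mathcal Y_i$ inside both $U_n$ and $\sum_{k=1}^KT_k$. Using the oracle identities $\Pi^0P_V=\Pi^0$ and $P_U\Pi^0=\Pi^0_K:=\sum_{k=1}^K\sigma_k\mu_k\nu_k^\top$, the constant parts give $\tr(\Pi^0\Pi^{0\top})-\sum_{k=1}^K\sigma_k^2=\sum_{k=K+1}^R\sigma_k^2=:D_n$, the two cross (linear) terms collapse to $L_n:=2\langle \Pi^0-\Pi^0_K,\bar{\mathcal Y}\rangle$, and the purely quadratic parts combine into the degenerate $U$-statistic
\[
W_n=\frac1{n(n-1)}\sum_{i\neq j}h(\mathcal Y_i,\mathcal Y_j),\qquad h(\mathcal Y_i,\mathcal Y_j)=\tr(\mathcal Y_i\mathcal Y_j^\top)-\tr(P_U\mathcal Y_iP_V\mathcal Y_j^\top).
\]
Thus $V_n=D_n+L_n+W_n$; note $W_n$ depends on $\Pi^0$ only through the oracle subspaces, so to leading order it is the same object analyzed under $H_0$, and $E[h(\mathcal Y_1,\mathcal Y_2)\mid\mathcal Y_1]=0$, i.e.\ $h$ is degenerate.

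Second, I would establish the CLT for the normalized statistic $W_n/s_n$, $s_n=\sqrt{2\tr(\Sigma^2)}/n$. Vectorizing with $\Sigma=\Sigma_1\otimes\Sigma_2$, a direct second-moment computation gives $\Var(W_n)=\tfrac{2}{n(n-1)}E[h^2]$ with $E[\tr(\mathcal Y_1\mathcal Y_2^\top)^2]=\tr(\Sigma^2)$, while the subtracted piece and the cross term are bounded by $c\,\|\Sigma\|_2^2$ (rank-$K$ and rank-$R$ projections applied to $\Sigma$), hence $o(\tr(\Sigma^2))$ by \ref{ass4}; therefore $E[h^2]=\tr(\Sigma^2)(1+o(1))$ and $\Var(W_n)=s_n^2(1+o(1))$. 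Asymptotic normality of $W_n/s_n$ I would obtain from a martingale/Hall-type CLT for degenerate $U$-statistics, whose sufficient condition reduces to $E[G^2]/(E[h^2])^2\to0$ with $G(y_1,y_2)=E_Y[h(y_1,Y)h(y_2,Y)]$ and $E[h^4]/\big(n(E[h^2])^2\big)\to0$. The first is exactly $\tr(\Sigma^4)=o(\tr^2(\Sigma^2))$ of \ref{ass4} (since $E[G^2]=\tr(\Sigma^4)(1+o(1))$), and the second follows from the eighth-order product-moment conditions in \ref{ass13/30} together with $pq=O(n^4)$, which control the non-Gaussian kurtosis contributions to $E[h^4]$.

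Third, I would assemble the limit. By hypothesis $D_n/s_n=\tfrac{n\sum_{k=K+1}^R\sigma_k^2}{\sqrt{2\tr(\Sigma^2)}}\to\delta$, and under the local scaling $\sum_{k>K}\sigma_k^2=(\delta+o(1))s_n$. The linear term has mean zero and $\Var(L_n)=\tfrac4n\,\tvec(\Pi^0-\Pi^0_K)^\top\Sigma\,\tvec(\Pi^0-\Pi^0_K)\le \tfrac4n\|\Sigma\|_2\sum_{k>K}\sigma_k^2$, so $\Var(L_n)/s_n^2\lesssim \|\Sigma\|_2/\sqrt{\tr(\Sigma^2)}\to0$ by \ref{ass4}, whence $L_n/s_n\xrightarrow{p}0$. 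Combining with $W_n/s_n\xrightarrow{d}N(0,1)$ and Slutsky's theorem gives $G_n=D_n/s_n+L_n/s_n+W_n/s_n\xrightarrow{d}N(\delta,1)$. Since the limiting law $N(\delta,1)$ has a continuous distribution function, convergence in distribution upgrades to uniform convergence of the CDFs (P\'olya's theorem), yielding $\sup_x|P(G_n\le x)-P(N(\delta,1)\le x)|\to0$.

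Finally, the main obstacle. The genuinely technical step is the degenerate $U$-statistic CLT in the high-dimensional, non-Gaussian regime: verifying $E[h^4]=o\big(n\,\tr^2(\Sigma^2)\big)$ demands careful bookkeeping of the mixed product moments of the entries of $\mathcal Z$ up to order eight (precisely what \ref{ass13/30} supplies), and one must simultaneously show that replacing $\tr(\mathcal Y_1\mathcal Y_2^\top)$ by $h$—i.e.\ subtracting the rank-$K$/rank-$R$ projection term—perturbs neither the leading variance nor the fourth-moment bound. The algebraic reductions (collapsing the linear terms, identifying $D_n$) are routine; the moment accounting behind the CLT, and checking that the oracle projections enter only at lower order, are where the real work lies.
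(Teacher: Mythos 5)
Your plan is correct and follows essentially the same route as the paper: decompose $V_n$ into the signal $\sum_{k>K}\sigma_k^2$, a negligible linear term, and a degenerate $U$-statistic in the centered variables, verify that the projection corrections perturb neither $E[h^2]=\tr(\Sigma^2)(1+o(1))$ nor the Hall-type conditions $\tr(\Sigma^4)=o(\tr^2(\Sigma^2))$ and $E[h^4]=o(n\,\tr^2(\Sigma^2))$ supplied by \ref{ass13/30} and \ref{ass4}, and conclude via the martingale CLT for degenerate $U$-statistics as in the Chen--Qin argument that the statistic reduces to when $K=0$. The handling of the local alternative (showing $\Var(L_n)/s_n^2\lesssim\|\Sigma\|_2/\sqrt{\tr(\Sigma^2)}\to 0$) and the upgrade to uniform convergence via P\'olya's theorem also match the paper's treatment.
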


Theorem \ref{thm333} describes the theoretical properties of our newly proposed statistic in the oracle setting. Under the null hypothesis, the statistic will be asymptotically normal. It has power under the local alternative. When we set $K=0$, the statistic simplifies to $\dfrac{nU_n}{\sqrt{2 \tr(\Sigma^2)}}$, which is equivalent to the statistic proposed by \citet{chen2010tests} for testing the mean of a random vector when we are testing whether the mean matrix is a zero matrix or not.

There are two remaining challenges to address. Firstly, in the formulation of $T_k$, we assume that the singular vectors are known. However, we still need to find an estimator of singular vectors that can converge quickly. Secondly, we need a reliable estimator of $\tr(\Sigma^2)$. We are going to address these two challenges in the following sections.

\subsection{Sparse SVD Estimator and Its Property}\label{sec4.4.23/30}
For an estimator of the singular vectors of $\Pi^0$, one naive idea is to use singular vectors of $\bar{\mathcal{X}}$. However, as error accumulates when the matrix size is comparable to $n$, the size cannot be controlled well. Hence, sparsity of the matrix is required. Assuming that the mean matrix $\Pi^0$ is sparse, with non-zero entries only on $k$ rows and $l$ columns, where $k$ and $l$ are relatively small compared to the number of rows $q$ and number of columns $p$, and zero otherwise, we label the nonzero rows of $\Pi^0$ by $I=\{j:\Pi^0_{j,:}\neq 0\}$ and the nonzero columns by $J=\{j:\Pi^0_{:,j}\neq 0\}$. Without loss of generality, assume $I=[k]$ and $J=[l]$.  We require a sparse SVD method for a good estimate of the sparse singular vectors.

The concept of sparse singular value decomposition (SVD) has been recently introduced and thoroughly investigated. \citet{yang2016rate} developed a two-stage algorithm tailored for the decomposition of sparse and low-rank matrices, offering a detailed analysis of the algorithm's convergence rate. The initial phase of their algorithm eliminates zero rows and columns, facilitating a robust initialization for singular vectors. Subsequently, the algorithm iteratively refines singular vectors through truncation, enhancing convergence accuracy. Extending this methodology, \citet{zhang2019optimal} adapted the approach for application to higher-order tensors and demonstrated the convergence rate of their estimator. Nevertheless, the error variance assumptions in both studies are simplified, assuming that matrices $A$ and $B$ are identities, which implies a constant error variance. Our proposed methodology diverges from these assumptions, necessitating the derivation of unique theoretical properties due to the complexity of generalizing their results to our framework. Moreover, both \citet{yang2016rate} and \citet{zhang2019optimal} presuppose a specified matrix rank and do not address the theoretical implications when the actual rank deviates from the assumption. In contrast, our approach does not require a predefined matrix rank, introducing a technique to estimate it instead. This adaptation affords a more versatile and applicable method for sparse SVD in the analysis of high-dimensional matrices, circumventing the limitations of previous models.

In a distinct approach, \citet{lee2010biclustering} introduced a sparse SVD technique framed within a penalized regression paradigm, employing a grouped folded concave penalty to induce sparsity in the singular vectors. By reconceptualizing the sparse SVD challenge as a regression issue, their methodology leverages the penalty mechanism to foster sparsity among singular vectors. However, a detailed exploration of the theoretical properties of their estimator was not provided. In this section, we adapt their algorithm to our specific context, where the sparse SVD problem pertains to matrices exhibiting a non-identity covariance structure. We delineate the theoretical attributes of our estimator, demonstrating its consistency and elucidating its convergence rate, thereby extending the application and understanding of sparse SVD methodologies in more complex settings.

One motivation of the method is from the fact that the closest rank-$K$ estimation of a matrix $X\in\mathbb{R}^{q\times p}$ in the sense of Frobenius norm is to solve the following problem \citep{eckart1936approximation}:
\begin{eqnarray}\label{eqna2_3}
(\tilde{U},\tilde{V},\tilde{\Lambda})=\arg\min_{(U,V,\Lambda)\in \mathcal{F}_{q,p,K}}\| X-U\Lambda V^T\|^2_F.
\end{eqnarray} 
To get a sparse solution and consistent result, it motivates us to consider unfolded concave penalty terms adding to the loss function. Consider minimizing the following penalized equation with respect to $(U, V, \Lambda)$:
\begin{equation}\label{eqn2_6}
(\tilde{U},\tilde{V},\tilde{\Lambda})=\arg\min_{(U,V,\Lambda)\in\mathcal{F}_{q,p,K}}\|\bar\mX-U\Lambda V^\top\|_F^2+\sum_{i=1}^{q}p_{\lambda_u}(\|U_{i,:}\|_2)+\sum_{j=1}^{p}p_{\lambda_v}(\|V_{j,:}\|_2).
\end{equation}
(\ref{eqn2_6}) is highly related to penalized regression problem. To see this, for fixed $V$,  minimization of (\ref{eqna2_3}) with respect to $\Lambda$ and $U$ is equivalent to minimization of
\begin{eqnarray}\label{eqn2_5} 
\tilde{U}=\arg\min_{U\in \mathbb{R}^{q\times K}}\| \tvec(X)-(V\otimes I)\tvec(U)\|_F^2,
\end{eqnarray}
and by normalizing $\tilde{U}$ with  $\tilde{U}=U \Lambda $.
Therefore, (\ref{eqn2_5}) can be viewed as regression problem with design matrix $V\otimes I_q$, response $\tvec(X)$ and coefficients $\tvec(U)$.

We next study the theoretical properties of the proposed estimation procedure.
Assume that the penalty function $p_{\lambda}(t_0)$ is increasing
and concave in $t_0\in [0,\infty)$, and has a continuous derivative $p'_{\lambda}(t_0)$ with
$p'_{\lambda}(0+)>0$. 
In addition, assume $p^\prime(t_0,\lambda)/\lambda$ is increasing in $\lambda\in(0,\infty)$ and
$p^\prime(0+,\lambda)/\lambda$ is independent of $\lambda$. Further define the local concavity
of the penalty function $p_\lambda$ at vector $u\in\mathbb{R}^{k}$ as
$$\kappa(p_\lambda,u)=\lim_{\epsilon\rightarrow 0^+}\max_{1\leq j\leq k}
\sup_{t_1<t_2\in(|u_j|-\epsilon,|u_j|+\epsilon)}-\frac{p_\lambda^\prime(t_2)-p_\lambda^\prime(t_1)}{t_2-t_1}.$$
The following notation and assumptions are imposed to establish theoretical result. Suppose $(U^0, V^0,\Lambda^0)$ is a rank-$K$ SVD triplet that can minimize $\|\Pi^0-U\Lambda V^\top\|_F$.
Let $l_u=\min_{j\in[k]}\|U^{0}_{j,:}\|_2$, $l_v=\min_{j\in[l]}\|V^{0}_{j,:}\|_2$, the  minimum
signal of rows of $U^0$ and $V^0$ respectively. Define
$\mathcal{N}_{u0}=\{U\in\mathbb{R}^{q\times K}:\|U-U^0\|_F\leq l_u/2, U_{[k]^c,:}=0\}$, $\mathcal{N}_{v0}=\{V\in\mathbb{R}^{p\times K}:\|V-V^0\|_F\leq l_v/2, V_{[l]^c,:}=0\}$. 

\begin{enumerate}[resume*=myexample]
\item\label{assa5} $0<c_0<\sigma_1(\Pi_0)\leq\cdots\leq\sigma_K(\Pi^0)<c_0$, where $c_0$ and $c_1$ are constants independent of $n$.
\item Let $\tilde{A}=A_{[k],:}$, and $\tilde{B}=B_{:,[l]}$. $\tilde\Sigma_1=\tilde{B}^\top\tilde{B}$, $\tilde\Sigma_2=\tilde{A}\tilde{A}^\top$. We assume $\|\tilde{\Sigma}_1\otimes\tilde\Sigma_2\|_2<C$ for constant $C$ independent of $n$.
\item\label{ass777} Elements of $\mZ$ are subguassian random variables with a uniform subguassian norm  $K_0$.
\item\label{assa6} $\lambda_{u}\gg\max\{n^{1/\varpi+\varsigma}\sqrt{k/n}, \sqrt{\log (qp)/n}\}$ with some $\varpi\geq 8$ and some arbitrary small $\varsigma>0$, $p'_{\lambda_u}(l_u)=o((nk)^{-1/2})$, $\max_{u\in\mathcal{N}_{u0}}\kappa(p_{\lambda_u},u)=o(1)$.\\
$\lambda_{u}\gg\max\{n^{1/\varpi+\varsigma}\sqrt{k/n}, \sqrt{\log (qp)/n}\}$ with some $\varpi\geq 8$ and some arbitrary small $\varsigma>0$, $p'_{\lambda_v}(l_v)=o((nk)^{-1/2})$, $\max_{v\in\mathcal{N}_{u0}}\kappa(p_{\lambda_v},v)=o(1)$.
\end{enumerate}

\begin{theorem}\label{thm999}  Suppose $\Pi^0$ is a rank-R matrix and  $( U^0, V^0,\Lambda^0)$ is a rank-$K$ SVD triplet that can minimize $\|\Pi^0-U^0\Lambda^0 V^{0^\top}\|_F$, where $0<K\leq R$ is an integer. Under assumption \ref{assa5}-\ref{assa6}, we can show that with probability tending to 1, there exists a local minimizer $(\hat{U}, \hat{V}, \hat\Lambda)$ of (\ref{eqn2_6}) satisfying: (i) $\hat{U}_{[k]^c,:}=0, \hat{V}_{[l]^c,:}=0.$(ii) $\|\hat{U}\hat{U}^\top-U^0U^{0^\top}\|_F=O_P(\dfrac{\sqrt{k}+\sqrt{l}}{\sqrt{n}})$, $\|\hat{V}\hat{V}^\top-V^0V^{0^\top}\|_F=O_P(\dfrac{\sqrt{k}+\sqrt{l}}{\sqrt{n}}).$
\end{theorem}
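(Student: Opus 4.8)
\section*{Proof proposal for Theorem \ref{thm999}}

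The plan is to follow the folded-concave penalty ``oracle'' strategy: first construct a restricted estimator supported on the true row sets, establish its convergence rate, and then verify that this restricted estimator is in fact a local minimizer of the full penalized objective (\ref{eqn2_6}). Concretely, I would restrict the optimization of (\ref{eqn2_6}) to the oracle subspace $\{(U,V,\Lambda)\in\mathcal{F}_{q,p,K}: U_{[k]^c,:}=0,\ V_{[l]^c,:}=0\}$ and let $(\hat U,\hat V,\hat\Lambda)$ be a minimizer of the restricted problem lying in the neighborhood $\mathcal{N}_{u0}\times\mathcal{N}_{v0}$. Because every signal row satisfies $\|U^0_{j,:}\|_2\geq l_u$ and $\|V^0_{j,:}\|_2\geq l_v$, and assumption \ref{assa6} forces $p'_{\lambda_u}(l_u)=o((nk)^{-1/2})$ together with $\max_{u\in\mathcal N_{u0}}\kappa(p_{\lambda_u},u)=o(1)$, the penalty is essentially flat and strictly sub-dominant on the support; hence on the oracle subspace the penalized problem behaves, up to negligible terms, like the unpenalized best rank-$K$ approximation of $\bar\mX$ restricted to rows $[k]$ and columns $[l]$. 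Claim (i) will then be delivered by the local-minimizer verification in the last step.

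For the rate in (ii), I would write $\bar\mX=\Pi^0+E$ with $E=A\bar\mZ B$, and exploit that $\bar\mZ$ is an average of $n$ i.i.d. matrices with subgaussian entries (assumption \ref{ass777}), so that the restricted error $E_{[k],[l]}=\tilde A\bar\mZ\tilde B$ is a $k\times l$ matrix whose entries have scale of order $1/\sqrt n$ and whose covariance is governed by $\|\tilde\Sigma_1\otimes\tilde\Sigma_2\|_2<C$. A subgaussian operator-norm concentration bound then gives $\|E_{[k],[l]}\|_2=O_P((\sqrt k+\sqrt l)/\sqrt n)$. Since assumption \ref{assa5} keeps the nonzero singular values of $\Pi^0$ bounded away from $0$ and $\infty$, the singular gap is bounded below, and the operator-norm Wedin / sin-$\Theta$ theorem converts this into $\|\sin\Theta(\hat U,U^0)\|_2=O_P((\sqrt k+\sqrt l)/\sqrt n)$; because $K=O(1)$, the bound $\|\hat U\hat U^\top-U^0U^{0^\top}\|_F\leq\sqrt{2K}\,\|\sin\Theta(\hat U,U^0)\|_2$ yields the stated rate, and symmetrically for $V$. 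The coupling between the left and right singular subspaces is handled by applying Wedin's theorem to the joint system, and the negligible penalty perturbation from the previous step does not affect the rate.

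The crucial and most delicate step is showing that the oracle estimator, extended by zeros off the support, is a genuine local minimizer of the full objective (\ref{eqn2_6}) over all of $\mathcal{F}_{q,p,K}$, which is where claim (i) is secured. Using the regression reformulation (\ref{eqn2_5}) --- for fixed $V$ the minimization over $U$ (and $\Lambda$) is a group-penalized least squares problem with design $V\otimes I_q$ --- I would verify the Karush--Kuhn--Tucker / subgradient conditions separately for the on-support and off-support rows. On the support, first-order stationarity holds by construction, and the local strict convexity needed for a strict local minimum follows from the bounded restricted curvature (the bounded-below singular values give a restricted-eigenvalue-type lower bound) combined with $\kappa(p_{\lambda_u},u)=o(1)$, so the concavity of the penalty cannot overturn the quadratic curvature. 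Off the support, I must show that turning on any zero row strictly increases the objective: this reduces to bounding $\max_{j\in[k]^c}$ of the relevant gradient row of the smooth loss and showing it is dominated by $\lambda_u p'_{\lambda_u}(0+)>0$. A maximal subgaussian-tail bound over the $qp$ coordinates gives that this maximum score is $O_P(\sqrt{\log(qp)/n})$, and assumption \ref{assa6}'s requirement $\lambda_u\gg\sqrt{\log(qp)/n}$ (and symmetrically for $\lambda_v$) makes the penalty derivative dominate, so the KKT inequalities hold with probability tending to one.

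The main obstacle I anticipate is precisely this last verification, for two reasons. First, the objective is non-convex on three counts --- the Stiefel-manifold constraint $(U,V,\Lambda)\in\mathcal{F}_{q,p,K}$, the bilinear coupling in $U\Lambda V^\top$, and the concave penalty --- so one cannot simply appeal to a convex primal-dual witness; the local analysis must be carried out directly on the manifold, with the alternating regression structure used to decouple $U$ and $V$. Second, unlike \citet{yang2016rate} and \citet{zhang2019optimal}, the error covariance here is non-identity, so the restricted-error operator-norm concentration and the gradient-maximum bound must be established under the general structure $\Sigma=\Sigma_1\otimes\Sigma_2$ using only the moment and subgaussianity assumptions \ref{ass13/30} and \ref{ass777} together with the restricted bound $\|\tilde\Sigma_1\otimes\tilde\Sigma_2\|_2<C$, which is the technically novel ingredient of the argument.
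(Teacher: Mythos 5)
Your overall strategy---restrict (\ref{eqn2_6}) to the oracle subspace $\{U_{[k]^c,:}=0,\ V_{[l]^c,:}=0\}$, get the rate for the restricted estimator from an operator-norm concentration bound on $\tilde A\bar{\mZ}\tilde B$ combined with a Wedin $\sin\Theta$ argument, and then certify that the zero-extended restricted minimizer is a local minimizer of the full objective by checking the subgradient conditions row by row, with $\lambda_u\gg\sqrt{\log(qp)/n}$ dominating the maximal off-support score and $p'_{\lambda_u}(l_u)=o((nk)^{-1/2})$, $\max_{u\in\mathcal N_{u0}}\kappa(p_{\lambda_u},u)=o(1)$ neutralizing the penalty on the support---is precisely the folded-concave oracle argument that assumptions \ref{assa5}--\ref{assa6} are tailored for, and it is the route the paper follows. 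Your use of the projection distance $\|\hat U\hat U^\top-U^0U^{0\top}\|_F$ (rather than $\|\hat U-U^0\|_F$) is also the right choice given the rotational non-uniqueness of the triplet.

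One step, however, is asserted rather than proved and does not follow from the stated hypotheses: ``the singular gap is bounded below.'' Assumption \ref{assa5} controls only $\sigma_1(\Pi^0),\dots,\sigma_K(\Pi^0)$. When $K=R$ the relevant gap is $\sigma_K(\Pi^0)-\sigma_{K+1}(\Pi^0)=\sigma_R(\Pi^0)\geq c_0$ and your Wedin step goes through. But the theorem permits $K<R$, in which case $U^0\Lambda^0V^{0^\top}$ is only the best rank-$K$ approximation of $\Pi^0$; the effective perturbation of $\bar\mX_{[k],[l]}$ around this target is $(\Pi^0-U^0\Lambda^0V^{0^\top})_{[k],[l]}+\tilde A\bar\mZ\tilde B$, whose deterministic part has operator norm $\sigma_{K+1}(\Pi^0)$, and the $\sin\Theta$ bound for the top-$K$ subspace carries a factor $1/(\sigma_K(\Pi^0)-\sigma_{K+1}(\Pi^0))$. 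If $\sigma_{K+1}$ is allowed to approach $\sigma_K$, the top-$K$ subspace is not stably identified and the claimed rate $O_P((\sqrt k+\sqrt l)/\sqrt n)$ cannot be extracted by this route; you need either to invoke an explicit eigengap condition (which is presumably the intent behind the garbled two-sided bound in \ref{assa5}) or to restrict the argument to $K=R$. A second, smaller omission: you take for granted that the restricted problem has a local minimizer inside $\mathcal N_{u0}\times\mathcal N_{v0}$; this requires the standard step of showing that, with probability tending to one, the objective on the boundary of a ball of radius $C(\sqrt k+\sqrt l)/\sqrt n$ around $(U^0,V^0,\Lambda^0)$ exceeds its value at the center, which is where the curvature lower bound from \ref{assa5} and the condition $\kappa(p_{\lambda_u},\cdot)=o(1)$ must actually be deployed rather than merely cited.
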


Theoretical properties of the proposed estimator of singular vectors are established in Theorem \ref{thm999}, where the rate of convergence and sparsity are shown. The support of nonzero rows and columns can be estimated precisely. Additionally, an iterative algorithm is proposed to solve (\ref{eqn2_6}), which is shown in Algorithm \ref{alg1}. Tuning parameters $\lambda_u$ and $\lambda_v$ are predetermined, but a data-driven method is required to determine these parameters. When the error variance is constant, such as when $A$ and $B$ are identity matrices, the BIC criterion can be used to select tuning parameters, as discussed by \citet{lee2010biclustering}. However, when error variance is not constant, the BIC criterion is not directly implementable. In the following simulation and real data study, a sample is split into training and validation samples to select tuning parameters. Specifically, the sample $\{\mX_i,i=1,\cdots,n\}$ is divided into two parts: $\{\mX_i,i=1,\cdots,n_1\}$ and $\{\mX_i,i=n_1+1,\cdots,n\}$. The first part is used to find an estimator $(\hat U_\lambda,\hat V_\lambda,\hat \Lambda_\lambda)$ of sparse SVD, and the loss on the second part is considered:
\[L(\lambda)=\sum_{i=n_1+1}^n\|\mX_i-\hat U_\lambda\hat\Lambda_\lambda\hat V_\lambda^\top\|_F^2.\]
$\lambda=(\lambda_u,\lambda_v)$ is tuned by minimizing $L(\lambda)$.

\begin{algorithm}[h]
\caption{}\label{alg1}
\begin{algorithmic}
\item \textbf{Input:} Sample mean matrix $\bar{\mX}\in \mathbb{R}^{q\times p}$, rank $K$ and hyper-parameter $\lambda_u$, $\lambda_v$.
\item \textbf{Output:} Estimators $\hat{U}$, $\hat{V},\hat\Lambda$.
\item[Step 1.] Apply the standard SVD to $\bar{\mX}$. Let $\{\hat\Lambda^{old},\hat{V}^{old},\hat{U}^{old}\}$  denote the first rank-$K$ SVD triplets.
 \item[Step 2.] Set $\tilde{u}=\arg\min_{u\in \mathbb{R}^{qK}}\|\tvec(\bar\mX)-( \hat{V}^{old}\otimes I_q)u\|_F^2+\sum_{i=1}^{q}p_{\lambda_u}(\|U_{i,:}\|_2)$, with $\tvec(U)=u.$
 \item[Step 3.]Reformulate $\tilde{u}$ to a $q\times K$-matrix $\hat{U}^{thr}$ such that $\tvec(\hat{U}^{thr})=\tilde{u}$.  
\item[Step 4.] Orthonomalization with QR decomposition: $\hat{U}^{new}R_U=\hat{U}^{thr}$.
\item[Step 5.] Set $\tilde{v}=\arg\min_{v\in \mathbb{R}^{Kp}}\|\tvec(\bar\mX)-(I_p\otimes \hat{U}^{new})v\|_F^2+\sum_{i=1}^{p}p_{\lambda_v}(V_{i,:}),$ with $\tvec(V^\top)=v$. 
 \item[Step 6.]Reformulate $\tilde{v}$ to a $p\times K$-matrix $\hat{V}^{thr}$ such that $\tvec((\hat{V}^{thr})^\top)=\tilde{v}$.  
\item[Step 7.] Orthonomalization with QR decomposition: $\hat{V}^{new}R_V=\hat{V}^{thr}$.
\item[Step 8.] Set $\hat{V}^{old}=\hat{V}^{new},$  and repeat Steps 2 to 8  until convergence.
\item Set  $\hat{U}=\hat{U}^{new},\hat{V}=\hat{V}^{new}$ at convergence. $\hat{\Lambda}=\hat U^\top\bar{\mX}\hat V$.
\end{algorithmic}
\end{algorithm}

\subsection{The Plug-in Estimator}\label{sec4.4.33.30}
Define the plug-in estimator of squared sum of singular values 
\[\hat{T}_k=\dfrac{\sum_{i\neq j}\tr(\mX_i\hat{V}\hat{V}^\top\mX_j^\top\hat{\mu}_k\hat{\mu}_k^\top)}{n(n-1)}, \sum_{k=1}^K\hat{T}_k=\dfrac{\sum_{i\neq j}\tr(\mX_i\hat{V}\hat{V}^\top\mX_j^\top\hat{U}\hat{U}^\top)}{n(n-1)},\]
where $\hat{U}=(\hat{\mu}_1,\cdots,\hat\mu_K)$ and $\hat{V}=(\hat{\nu}_1,\cdots,\hat\nu_K)$ are sparse estimators of the first $K$ left and right singular vectors of $\Pi^0$ respectively. We require $\hat U$ and $\hat V$ to satisfy the following assumptions.

\begin{enumerate}[resume*=myexample]
\item\label{ass83/30} Let $\hat{I}$  and $\hat{J}$ be the set of nonzero rows of $\hat{U}$ and $\hat V$ respectively. We assume $P([k]\subset\hat{I},|I|\leq \kappa_n)\rightarrow 1$ and $P([l]\subset\hat{J},|J|\leq \kappa_n)\rightarrow 1$. $\kappa_n=o(n)$.
\item\label{ass93/30} $\|\hat{U}\hat{U}^\top-U^0U^{0^\top}\|_F=o_P(\sqrt{\dfrac{\tr(\Sigma^2)}{n^2}}),\|\hat{V}\hat{V}^\top-V^0V^{0^\top}\|_F=o_P(\sqrt{\dfrac{\tr(\Sigma^2)}{n^2}})$.
\end{enumerate}
To estimate $\tr(\Sigma^2),$ motivated by  Theorem 2 from \citet{li2012two},
 we can use 
 \begin{eqnarray*}
  \hat{S}_{n1}^{2}&=&\dfrac{\sum_{i< j}\tr(2\mathcal{X}_i\mathcal{X}_j^T)^2}{n(n-1)} -
 \dfrac{12\sum_{i<j<k}
 \tr(\mathcal{X}_i\mathcal{X}_j^T)\tr(\mathcal{X}_j\mathcal{X}_k^T)}{n(n-1)(n-2)}
 + \dfrac{24\sum_{i<j<k<l}
 \tr(\mathcal{X}_i\mathcal{X}_j^T)\tr(\mathcal{X}_k\mathcal{X}_l^T)}{n(n-1)(n-2)(n-3)}\\
 &=&\dfrac{6\sum_{i<j<k<l}\tr(\mX_i-\mX_j)(\mX_k-\mX_l)^\top \tr(\mX_i-\mX_j)(\mX_k-\mX_l)^\top}{n(n-1)(n-2)(n-3)}.
 \end{eqnarray*}
 It can be shown to be a ratio consistent estimator of $\tr(\Sigma^2).$
 
Base on the above, the oracle statistic can have a sample version $\hat{G}_n:=\dfrac{U_n-\sum_{k=1}^{K}\hat{T}_k}{\sqrt{2\hat{S}_{n1}^{2}/n^2 }}$. The following theorem shows the results that $\hat{G}_n $  can keep the size and power of the oracle statistic.

\begin{theorem}\label{thm55}
Suppose $(U^0,V^0,\Lambda^0)$ is a rank-K SVD triplet of $\Pi_0$ and $(\hat{U},\hat{V},\hat\Lambda)$ is a sparse estimator of it.  Suppose that assumption \ref{ass13/30}, \ref{ass4}, \ref{ass83/30} and \ref{ass93/30} hold.
Under null hypothesis, we have
$$\hat{G}_n\stackrel{d}{\longrightarrow}N(0,1), \text{as } n\rightarrow \infty.$$
When $R>K$ and $\dfrac{n\sum_{i=K+1}^R\sigma^2_{i}}{\sqrt{2\tr(\Sigma^2)}}\rightarrow \delta$ for a constant $\delta$, we have
\begin{eqnarray}\label{eqn4418}
\sup_x|P(\hat{G}_n\leq x)-P(N(\delta,1)\leq x)|\rightarrow 0.
\end{eqnarray}
\end{theorem}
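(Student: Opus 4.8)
The plan is to show that $\hat G_n$ is asymptotically equivalent to the oracle statistic $G_n$ of Theorem \ref{thm333}, so that both the null and the local-alternative conclusions transfer directly. Writing
\[
\hat G_n=\frac{\sqrt{\tr(\Sigma^2)}}{\sqrt{\hat S_{n1}^2}}\cdot\frac{\big(U_n-\sum_{k=1}^K T_k\big)+\big(\sum_{k=1}^K T_k-\sum_{k=1}^K\hat T_k\big)}{\sqrt{2\tr(\Sigma^2)/n^2}},
\]
the argument decomposes into three pieces: (a) the denominator correction $\sqrt{\tr(\Sigma^2)/\hat S_{n1}^2}\stackrel{p}{\longrightarrow}1$; (b) the leading term is exactly $G_n$, which by Theorem \ref{thm333} converges to $N(\delta,1)$ (with $\delta=0$ under $H_0$); and (c) the plug-in error $(\sum_k T_k-\sum_k\hat T_k)/\sqrt{2\tr(\Sigma^2)/n^2}=o_P(1)$. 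Slutsky's theorem then gives the null limit, and since the perturbation is $o_P(1)$ and the scaling is $1+o_P(1)$, the uniform (sup-over-$x$) convergence already established in Theorem \ref{thm333} for the continuous shifted-normal limit is preserved, yielding \eqref{eqn4418}.

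For piece (a) I would invoke the ratio consistency of $\hat S_{n1}^2$, adapting Theorem 2 of \citet{li2012two}: $\hat S_{n1}^2$ is an unbiased $U$-statistic estimator of $\tr(\Sigma^2)$, and under Assumption \ref{ass4}, in particular $\tr(\Sigma^4)=o(\tr^2(\Sigma^2))$, its variance is of smaller order than $\tr^2(\Sigma^2)$, so $\hat S_{n1}^2/\tr(\Sigma^2)\stackrel{p}{\longrightarrow}1$. This reduces the whole problem, via Slutsky, to piece (c).

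Piece (c) is the heart of the proof. Writing $P_U=U^0U^{0\top}$, $P_V=V^0V^{0\top}$ and $\hat P_U=\hat U\hat U^\top$, $\hat P_V=\hat V\hat V^\top$, and using $\sum_k\mu_k\mu_k^\top=P_U$, I would add and subtract the mixed term $\mX_i\hat P_V\mX_j^\top P_U$ to obtain
\[
\sum_{k=1}^K\hat T_k-\sum_{k=1}^K T_k=\underbrace{\frac{1}{n(n-1)}\sum_{i\neq j}\tr\!\big(\mX_i\hat P_V\mX_j^\top(\hat P_U-P_U)\big)}_{D_1}+\underbrace{\frac{1}{n(n-1)}\sum_{i\neq j}\tr\!\big(\mX_i(\hat P_V-P_V)\mX_j^\top P_U\big)}_{D_2}.
\]
By Cauchy--Schwarz on the Frobenius inner product, $|D_1|\le\|\hat P_U-P_U\|_F\,\|W_1\|_F$ and $|D_2|\le\|\hat P_V-P_V\|_F\,\|W_2\|_F$, where $W_1=\frac{1}{n(n-1)}\sum_{i\neq j}\mX_i\hat P_V\mX_j^\top$ and $W_2=\frac{1}{n(n-1)}\sum_{i\neq j}\mX_i^\top P_U\mX_j$. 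Assumption \ref{ass93/30} gives $\|\hat P_U-P_U\|_F,\|\hat P_V-P_V\|_F=o_P(\sqrt{\tr(\Sigma^2)}/n)$, so after dividing by $\sqrt{\tr(\Sigma^2)/n^2}$ it suffices to prove $\|W_1\|_F,\|W_2\|_F=O_P(1)$.

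The main obstacle is exactly this last bound, because $\hat P_V$ (and $\hat P_U$) are functions of the same sample and hence not independent of the $\mX_i$, so a naive moment computation for $W_1$ is invalid. To handle the data dependence I would exploit the support-recovery guarantee of Assumption \ref{ass83/30}: on an event of probability tending to one, the nonzero rows and columns of $\hat U,\hat V$ lie in index sets of size at most $\kappa_n=o(n)$ containing the true support, so $\hat P_V$ acts only on a block of dimension $\le\kappa_n$. Restricting to this block reduces $\|W_1\|_F$ to a $U$-statistic in effective dimension $o(n)$ whose mean is the bounded signal term $\Pi^0 P_V\Pi^{0\top}=U^0\Lambda^0\Lambda^{0\top}U^{0\top}=O(1)$ (by Assumption \ref{assa5}), while its fluctuation is controlled by a second-moment computation using the subgaussian/eighth-moment conditions of Assumptions \ref{ass13/30} and \ref{ass777} together with $\tr(\Sigma^4)=o(\tr^2(\Sigma^2))$ and $pq=O(n^4)$ from Assumption \ref{ass4}; a covering/union-bound over rank-$K$ projections supported on the recovered block (equivalently, a conditioning argument) removes the dependence of $\hat P_V$ on the sample at the cost of only lower-order factors. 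Establishing $\|W_1\|_F,\|W_2\|_F=O_P(1)$ uniformly over this class is the crux; once in hand, combining (a)--(c) through Slutsky's theorem and Theorem \ref{thm333} completes both the null and the local-alternative assertions.
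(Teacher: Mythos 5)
Your overall architecture is the natural one for this theorem and matches the intended route: ratio consistency of $\hat S_{n1}^2$ (piece (a)) is exactly what the paper asserts via \citet{li2012two}, the reduction to Theorem \ref{thm333} by Slutsky is right, and the decomposition of $\sum_k\hat T_k-\sum_k T_k$ into $D_1+D_2$ by inserting the mixed term $\mX_i\hat P_V\mX_j^\top P_U$, followed by Cauchy--Schwarz against $\|\hat P_U-P_U\|_F$ and $\|\hat P_V-P_V\|_F$ so that Assumption \ref{ass93/30} can be applied, is the standard and correct way to exploit the rate condition. You also correctly observe that the sup-norm convergence in \eqref{eqn4418} survives an $o_P(1)$ perturbation and a $1+o_P(1)$ rescaling because the limit law is continuous.

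The genuine gap is at the step you yourself label the crux: you never establish $\|(W_1)_{S,S}\|_F=O_P(1)$ (and the analogue for $W_2$), and the sketch you give is not obviously adequate. Two distinct difficulties are conflated there. First, even after restricting to the recovered block $S=\hat I\cup I$ of size $O(\kappa_n)$, the second-moment computation for the noise part of $W_1$ gives an order of magnitude roughly $\kappa_n\,\bar\sigma(\Sigma_2)\,\sqrt{\tr((P_V\Sigma_1)^2)}/n$; Assumption \ref{ass4} controls $\|\Sigma\|_2^2=\|\Sigma_1\|_2^2\|\Sigma_2\|_2^2$ only relative to $\tr(\Sigma^2)$, not by a constant, so you must say explicitly which assumption makes this quantity $O(1)$ --- as written the claim does not follow from the hypotheses you invoke. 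Second, and more seriously, both $\hat P_V$ inside $W_1$ and the support $S$ are functions of the same sample, so the moment bound cannot be applied directly. Your proposed fix --- a covering/union bound over rank-$K$ projections supported on the block --- requires a net of the Grassmannian of size $e^{cK\kappa_n}$ together with a matching exponential concentration bound for $\|(W_1)_{S,S}\|_F$ over each fixed projection and each of the $\binom{q}{\kappa_n}$ candidate supports; none of this is verified, and with $\kappa_n$ only assumed $o(n)$ it is not clear the union bound closes. A cleaner route is to split further, writing $\hat P_V=P_V+(\hat P_V-P_V)$ so that the leading term involves only deterministic projections (handled by a direct $U$-statistic variance calculation on the fixed true support) and the remainder is quadratic in the estimation errors, hence negligible by Assumption \ref{ass93/30} alone without any uniformity argument. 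Until one of these is carried out, the proof of $D_1,D_2=o_P\bigl(\sqrt{\tr(\Sigma^2)}/n\bigr)$ is incomplete.
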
 
In Theorem \ref{thm55}, we show that sample version statistic can perform as well as statistics with when estimators of singular vectors have fast convergence rate. The converging rate of  $\hat{U}$ (as well as $\hat{V}$)  must be in an order that $\|\hat{U}\hat{U}^\top-U^0U^{0^\top}\|_F=o_P(\sqrt{\tr(\Sigma^2)}/n)$. Notice that sparse SVD estimator $\hat U_\lambda,\hat V_\lambda$ solved from (\ref{eqn2_6}) can have  a rate of $\|\hat{U}_\lambda\hat{U}_\lambda^\top-U^0U^{0^\top}\|_F=O_P(\dfrac{\sqrt{k}+\sqrt{l}}{\sqrt{n}}).$ Therefore conditions in Theorem \ref{thm55} can be satisfied when $(k+l)n=o_P(\tr(\Sigma^2))$. It works when $p$ and $q$ are large enough. Our following empirical studies and real data analysis are based on this sample version statistic.

\section{Simulation Study}\label{sec4.53/30}
In this section, we show size and power performance of our proposed statistic. We compare $\hat{G}_n$ with minimum discrepancy test statistic $T_K$ reviewed in Section \ref{sec1.2}. $z_{i,j}'s$ follow standard normal distribution for Model (a) and Model (b) and follow $t_2$ distribution for Model (c). For minimum discrepancy test, we replace $\Sigma$ with $\sigma_0I_{qp\times qp}$ and estimate $\sigma^2_0$ by $\hat\sigma^2_0=\dfrac{1}{nqp}\sum\limits_{i=1}^n\sum\limits_{j=1}^q\sum\limits_{k=1}^p(\mX_{i,j,k}-\bar{X}_{j,k})^2$, where $\bar{X}_{j,k}=\dfrac{1}{n}\sum\limits_{i=1}^n\mX_{i,j,k}$.

\begin{table}[htbp]
\caption{Empirical sizes and powers of the test statistics}\label{tabb1a}
\setlength{\tabcolsep}{1.3mm}
\begin{center}
\begin{tabular}{ccccccccccccc}
\toprule
$n$ & $q$ & $p$  & \multicolumn{5}{c}{----------------$(i)$-----------------} & \multicolumn{5}{c}{----------------$(ii)$-----------------}\\
&&&size&\multicolumn{4}{c}{----------power----------}&size&\multicolumn{4}{c}{----------power----------}\\
\multirow{7}{*}{50} & \multirow{4}{*}{50} &  & $c=0$ & 0.1 & 0.2 & 0.3 & 0.4 & 0 & 0.1 & 0.2 & 0.3 & 0.4 \\ \cline{4-13} 
 & & 50 & 0.08&0.08&0.08&0.27&0.63 &0.03&0.03&0.08&0.28&0.63\\
 &  & 100 & 0.04&0.05&0.11&0.40&0.91 &0.01&0.02&0.11&0.42&0.91 \\
 &  & 200 & 0.05&0.06&0.17&0.73&0.98&0.01&0.03&0.15&0.71&0.98\\ \cline{2-13} 
 & \multirow{2}{*}{100} & 100 & 0.06&0.06&0.19&0.72&0.98&0.02&0.02&0.16&0.72&0.98\\
 &  & 200 & 0.04&0.05&0.34&0.97&1&0.01&0.02&0.32&0.97&1\\ \cline{2-13} 
 & 200 & 200 &  0.04&0.07&0.57&1&1&0&0.06&0.56&1&1\\ \toprule
\multirow{6}{*}{100} & \multirow{3}{*}{50} & 50 & 0.04&0.04&0.22&0.75&0.99&0.01&0.03&0.23&0.79&1\\
 &  & 100 &0.04&0.04&0.37&0.96&1&0.01&0.03&0.37&0.96&1\\
 &  & 200 & 0.03&0.04&0.6&1&1&0&0.02&0.61&1&1\\ \cline{2-13} 
 & \multirow{2}{*}{100} & 100 &0.04&0.05&0.93&0.99&1&0.01&0.04&0.99&1&1  \\
  &  & 200 &0.04&0.07&1&1&1&0&0.03&1&1&1 \\\toprule
\end{tabular}
\end{center}
\end{table}

Model (a):
 Set $\Pi^0_{i,j}=1$ when $1\leq i\leq \dfrac{q}{10}, 1\leq j\leq \dfrac{p}{10}$,   $\Pi^0_{i,j}=c$ when $\dfrac{q}{10}< i\leq \dfrac{q}{5},\dfrac{p}{10}< j\leq \dfrac{p}{5}$, and 0 otherwise.  
$\Sigma_1=\left(\sigma_{i j}\right)_{q \times q}$ and $\Sigma_2=\left(\sigma_{i j}\right)_{p \times p}$ respectively, with $\sigma_{i j}=0.25^{|i-j|}$.
The following null and alternative hypotheses are considered:
\begin{eqnarray}\label{eqn81}
H_0: \Pi^0\in\Theta_1 \text{\ \ versus\ \ } H_1: \Pi^0\notin\Theta_1.
\end{eqnarray}
When $c=0$, it corresponds to the null hypothesis. When $c\neq 0$, it corresponds to the alternative hypothesis. Table \ref{tabb1a} depicts the empirical size and power of
the tests under this model, with nominal level $\alpha=0.05$. $\hat{G}_n$ and $T_K$ can have similar power when $c$ is large, but $\hat{G}_n$ can control the size well but $T_K$ cannot. The empirical size of $T_K$ largely deviates from the nominal level, especially when $q$ and $p$ are large compared with $n$.

Model (b):  Set $\Pi^{0}_{1,1}=10, \Pi^{0}_{2,2}=c$, and 0 otherwise. 
$\Sigma_1=\left(\sigma_{i j}\right)_{q \times q}$ and $\Sigma_2=\left(\sigma_{i j}\right)_{p \times p}$ respectively, with $\sigma_{i j}=0.75^{|i-j|}$. We still consider hypotheses (\ref{eqn81}). Results are shown in Table \ref{tab3}. Scenario(i) is the performance of $\hat G_n$, while scenario (ii) is for $T_K$.
We can see that $T_K$ cannot control the size well under this setting. We also find out that power of $\hat{G}_n$ is smaller when size of matrix $q$ and $p$ grows. It is in line of our expectation that in power function (\ref{eqn4418}),  $\delta$ is proportional to $(\sqrt{\tr(\Sigma)})^{-1}$ and effects the local power. 

\begin{table}[htbp]
\caption{Empirical sizes and powers of the test statistics}\label{tab3}
\begin{center}
\begin{tabular}{ccccccccc}
\toprule
$n$ & $q$ & $p$ & \multicolumn{2}{c}{size} & $c=1$& 2 & 3 & 4 \\ \toprule
\multirow{7}{*}{50} & \multirow{4}{*}{50} &  & $(i)$ & $(ii)$ & $(i)$ & $(i) $ & $(i)$  & $(i)$  \\ \cline{4-9} 
 & & 50 &0.06&0.26&0.06&0.13&0.43&0.91  \\
 &  & 100 & 0.06&0.28&0.06&0.09&0.24&0.61  \\
 &  & 200 & 0.05&0.26&0.05&0.05&0.13&0.3 \\ \cline{2-9} 
 & \multirow{2}{*}{100}& 100 & 0.06&0.32&0.06&0.09&0.16&0.37  \\&&200&0.06&0.35&0.05&0.06&0.10&0.25\\ \toprule
\multirow{6}{*}{100} & \multirow{3}{*}{50} & 50 & 0.04&0.31&0.07&0.38&0.95&1\\
 &  & 100 & 0.04&0.32&0.06&0.19&0.71&0.99\\
 &  & 200 & 0.05&0.33&0.04&0.14&0.41&0.88\\ \cline{2-9} 
 & \multirow{2}{*}{100} & 100&0.05&0.34&0.05&0.11&0.46&0.89\\
 &&200&0.04&0.33&0.05&0.09&0.36&0.66\\
\toprule
\end{tabular}
\end{center}
\end{table}

\begin{remark}
Model (c):  To show the robustness of our statistics, we tried our statistics when setting the noise to $t_2$ distribution. The other settings are the same with Model (b). Table \ref{tab4} shows the empirical sizes and powers of our test statistics and minimum discrepancy test. Again, the minimum discrepancy test can't control the size well while our statistic can control the size well and has power when the parameter deviates from the null hypothesis.

\begin{table}[htbp]
\caption{Empirical sizes and powers of the test statistics}\label{tab4}
\begin{center}
\begin{tabular}{ccccccccc}
\toprule
$n$ & $q$ & $p$ & \multicolumn{2}{c}{size} & $c=1$& 2 & 3 & 4 \\ \toprule
\multirow{7}{*}{50} & \multirow{4}{*}{50} &  & $(i)$ & $(ii)$ & $(i)$ & $(i) $ & $(i)$  & $(i)$  \\ \cline{4-9} 
 & & 50 &0.07&0.29&0.08&0.15&0.39&0.90  \\
 &  & 100 & 0.06&0.29&0.08&0.11&0.25&0.63  \\
 &  & 200 & 0.06&0.35&0.07&0.09&0.15&0.31 \\ \cline{2-9} 
 & \multirow{2}{*}{100}& 100 & 0.04&0.27&0.06&0.10&0.19&0.42  \\&&200&0.06&0.35&0.08&0.10&0.13&0.28\\ \toprule
\multirow{6}{*}{100} & \multirow{3}{*}{50} & 50 & 0.05&0.26&0.09&0.42&0.98&1\\
 &  & 100 & 0.04&0.36&0.09&0.22&0.74&1\\
 &  & 200 & 0.05&0.30&0.06&0.17&0.45&0.90\\ \cline{2-9} 
 & \multirow{2}{*}{100} & 100&0.06&0.24&0.08&0.14&0.47&0.91\\
 &&200&0.07&0.31&0.09&0.12&0.41&0.70\\
\toprule
\end{tabular}
\end{center}
\end{table}
\end{remark}

\section{Real Data Analysis}\label{4.620220502}
\subsection{Detection of Pedestrians}\label{4.5.120220502}
We demonstrate the applicability of our proposed method by applying it to a dataset from a video surveillance system that monitors pedestrians passing through a specific area. This dataset has also been used in the evaluation of various motion detection algorithms in the field of computer vision, as described in \citet{wang2014cdnet}.
\begin{figure}[!h]
  \centering
  \includegraphics[width=14cm]{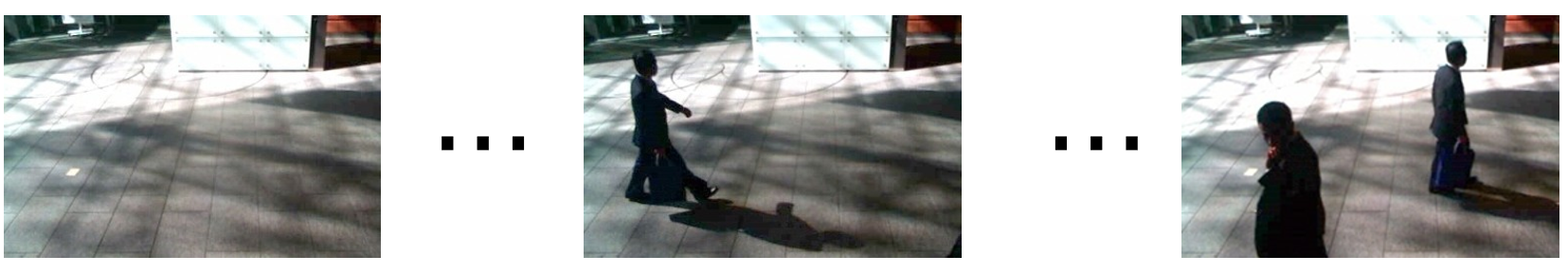}\\
  \caption{Sample video frames at different time. Left: background. Middle: one person. Right: two persons}\label{videofig}
\end{figure}

The data set we analyze comes from a video surveillance system monitoring pedestrians passing through a specific area, which has been used in the literature to compare motion detection algorithms in computer vision (see \citet{wang2014cdnet} for details). The data consists of 1105 sequential images, each taken at a different time point (see Figure \ref{videofig}). We use the gray-level matrix of each image as input and aim to detect changes in matrix rank at different time points using our proposed inference method and select images with higher rank. To achieve this, we create a sliding window, where $n=10$ consecutive images are treated as one sample, and we move the window back by $5$ time units each time. In total, we have 220 samples, and each image has a resolution of $61\times95$, making it a high-dimensional problem where $q\times p$ is much greater than $n$. We perform a sequential hypothesis testing for matrix rank on each sample, where we test $H_0: \rho(\Pi^0)\leq K$ vs. $H_1:\rho(\Pi^0)> K$ for $K=1,\cdots,T$, where $T$ is a relatively large integer, and the rank is determined by the smallest $K$ that cannot reject the null hypothesis.

If the rank of the matrix is tested to be 0 for a particular sample, we conclude that there is no object other than the background in that sample. Otherwise, we have confidence that there are objects present. We compare the results with the ground truth and find that our method can well separate the cases when there are people or not. The false-positive rate of our method is 0.078, and the false-negative rate is 0.059. In addition to detecting the presence of people, our rank detection result shows clear patterns: rank increases when people enter the area or more people join, and decreases when people leave the area. This provides additional insights into the dynamics of pedestrian movement in the monitored area.

\begin{figure}
	\centering
	\subfigure{
		\begin{minipage}[b]{0.3\textwidth}
			\includegraphics[width=1\textwidth]{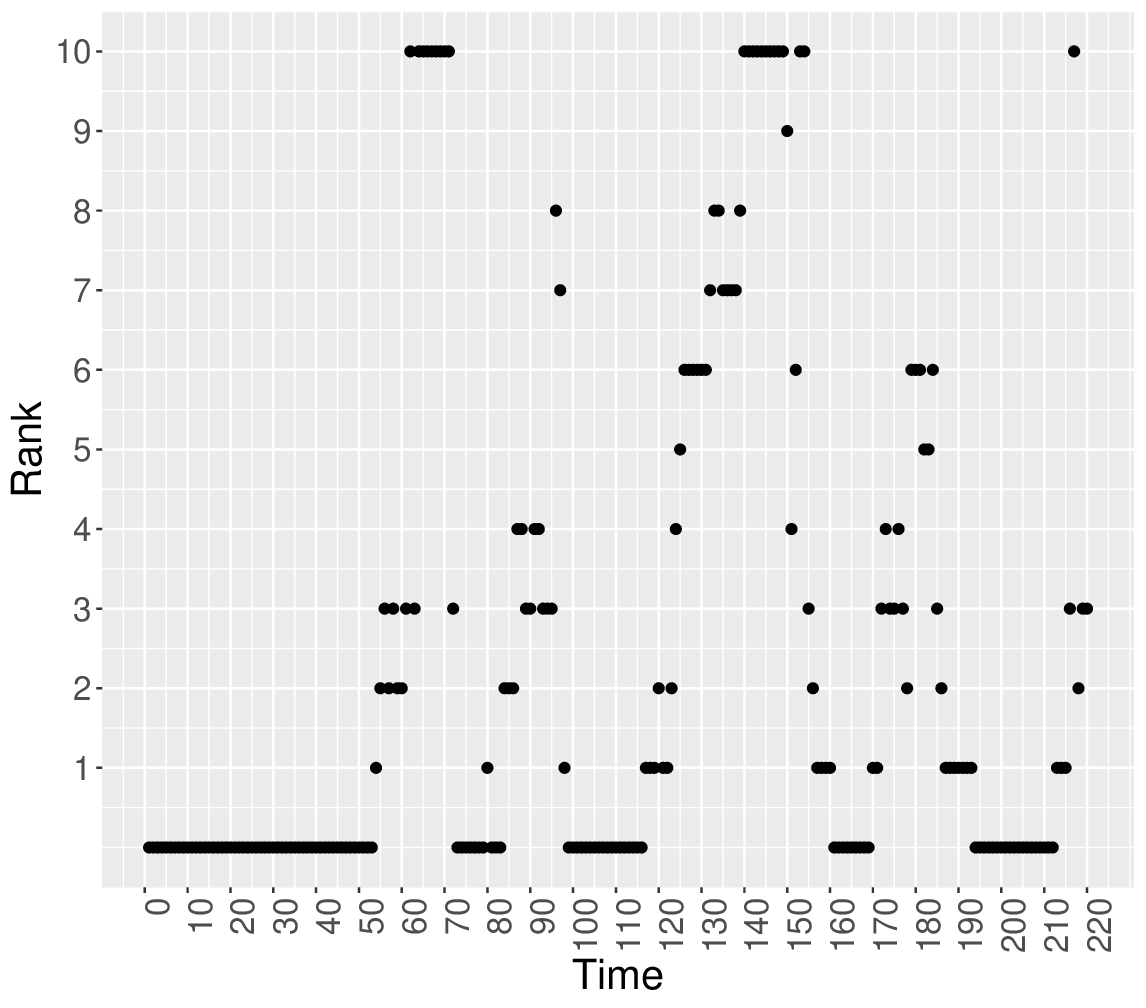} 
		\end{minipage}
	   }
    	\subfigure{
    		\begin{minipage}[b]{0.3\textwidth}
   		 	\includegraphics[width=1\textwidth]{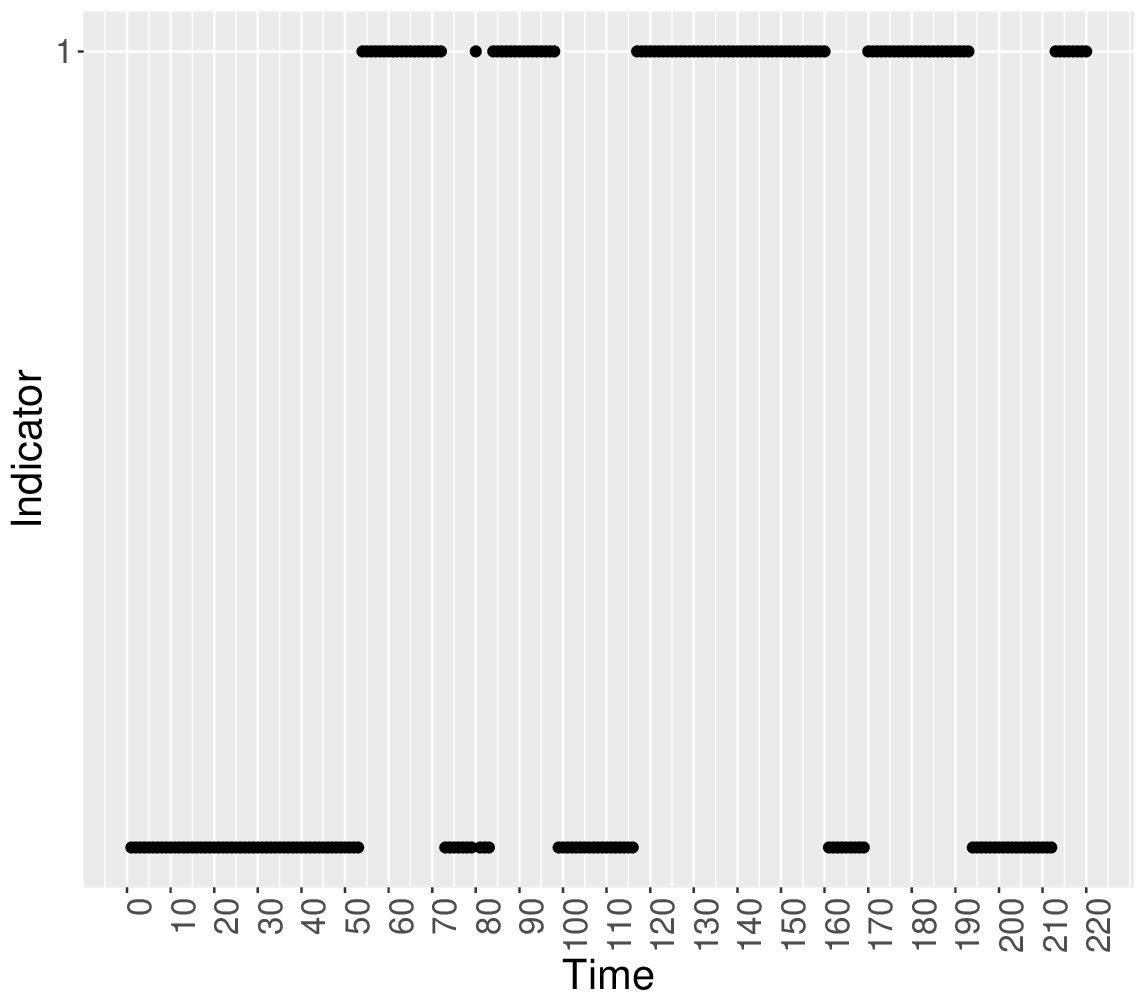}
    		\end{minipage}
    	}
	\subfigure{
		\begin{minipage}[b]{0.3\textwidth}
			\includegraphics[width=1\textwidth]{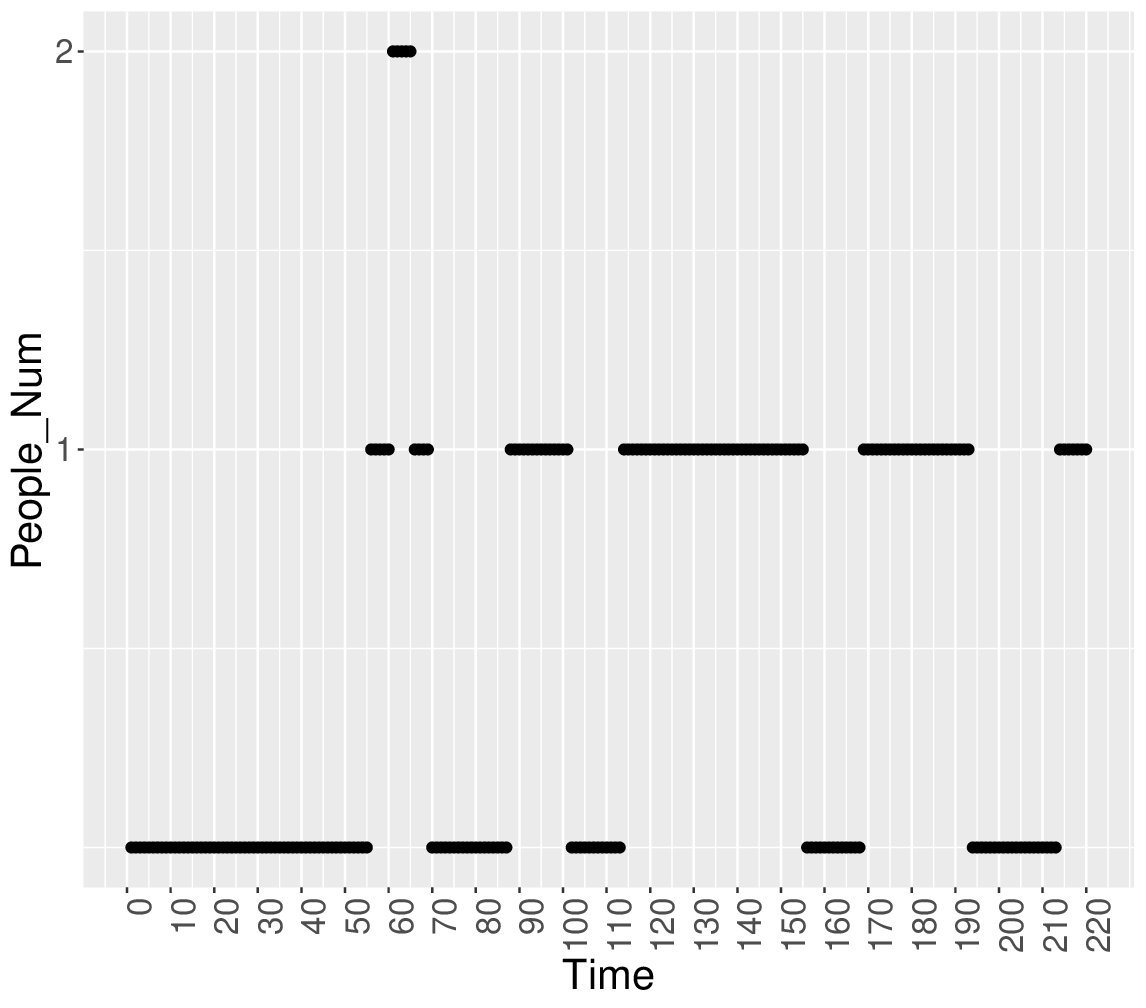} 
		\end{minipage}
		\label{fig:grid_4figs_1cap_4subcap_3}
		}
	\caption{Left: matrix rank detected. Middle: matrix rank greater than 0 or not. Right: ground truth of number of people in the video}\label{fig4}
\end{figure}
\begin{figure}
	\centering
	\subfigure{
		\begin{minipage}[b]{0.4\textwidth}
			\includegraphics[width=1\textwidth]{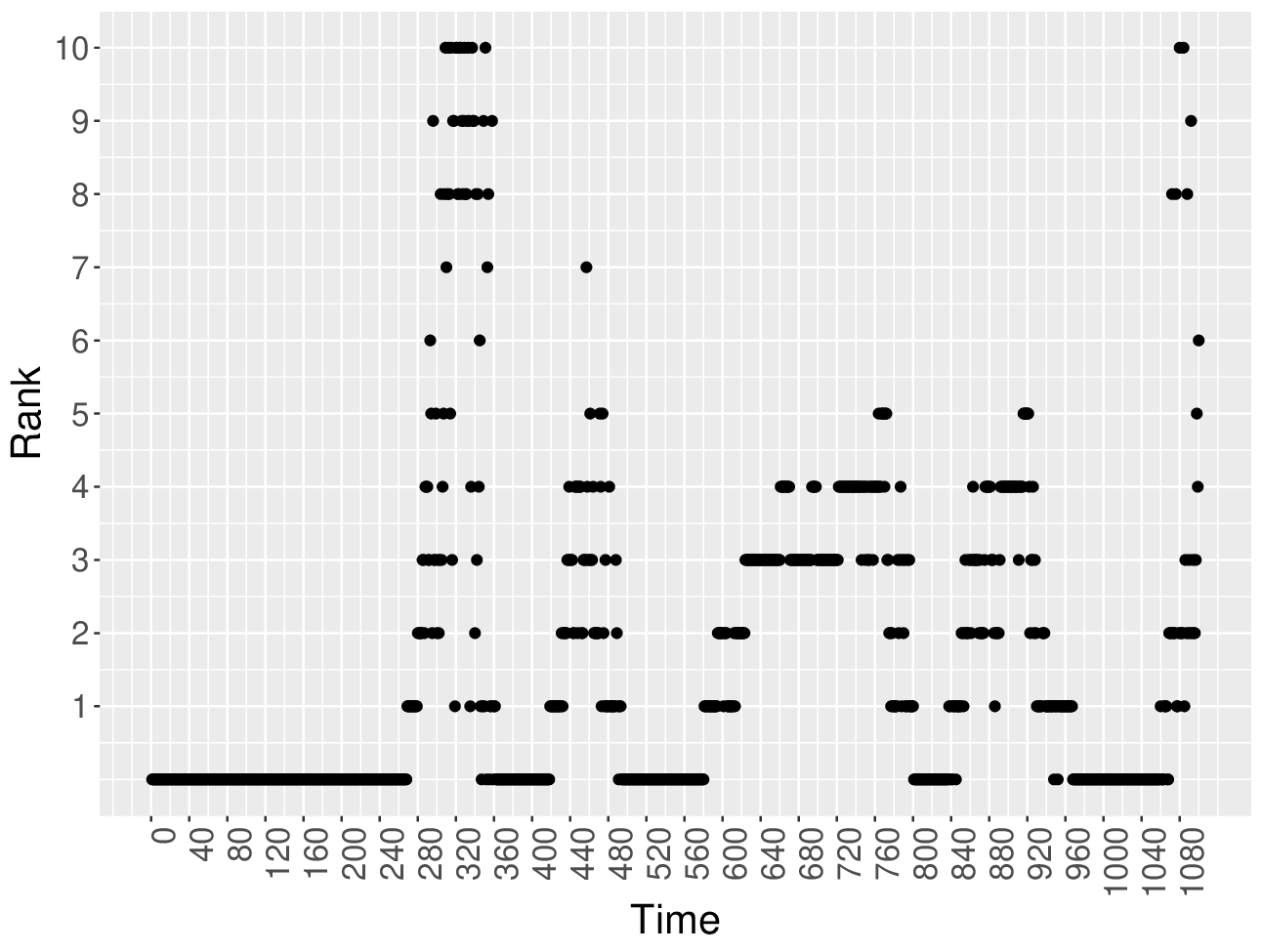} 
		\end{minipage}
	   }
    	\caption{Matrix rank detected}\label{Figure4.4}
\end{figure}

We reduce the sample size to $n=5$ and the sliding window length to 1. The results are presented in Figure \ref{Figure4.4}.  We compare the results with the ground truth of whether there are people or not in the sequence of images and compute the false-positive rate to be 0.018 and the false-negative rate to be 0.072.

\subsection{Detection of Cars in A Parking Lot}
We have the second dataset, which is a video surveillance recording of a parking lot, as illustrated in Figure \ref{videofigcar}. The number of cars in the parking lot varies across different periods, and we aim to use our rank testing procedure to demonstrate the changes in the number of cars over time.

\begin{figure}[!h]
  \centering
  \includegraphics[width=14cm]{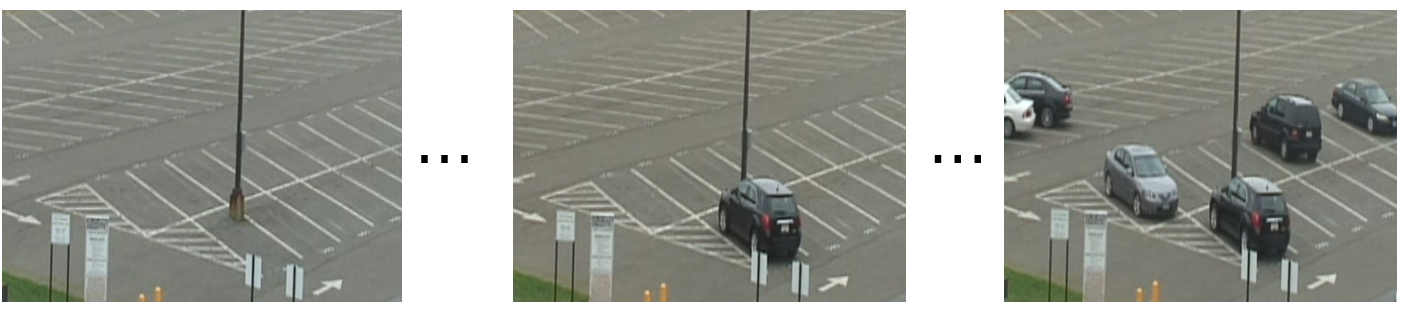}\\
  \caption{Sample video frames at different times. Left: background. Middle: one car. Right: several Cars}\label{videofigcar}
\end{figure}

We use the gray-level matrix of each image as input and have a total of 900 images. We use a sliding window of $n=10$ and move the sliding window 1-time unit back each time. Each image has a resolution of $34\times 81$. We consider the same sequential hypothesis testing problem as in Section \ref{4.5.120220502}. The final results of rank determination of the video data at different time points are shown in Figure \ref{fig4.6202205021913}. Compared to the ground truth, there is always an increase in rank when a car enters the parking lot. The amount of increase in rank for different cars varies, possibly due to the size of the car, which depends on the distance of the car from the camera. Additionally, one car might be blocked by another if they are parked next to each other. The rank often increases gradually when a car enters the parking lot, which can be attributed to the car's movement and entry into the video, as well as the sliding window methodology causing a time lag when a car enters the parking lot.

\begin{figure}
	\centering
	\subfigure{
		\begin{minipage}[b]{0.4\textwidth}
			\includegraphics[width=1\textwidth,height=6cm]{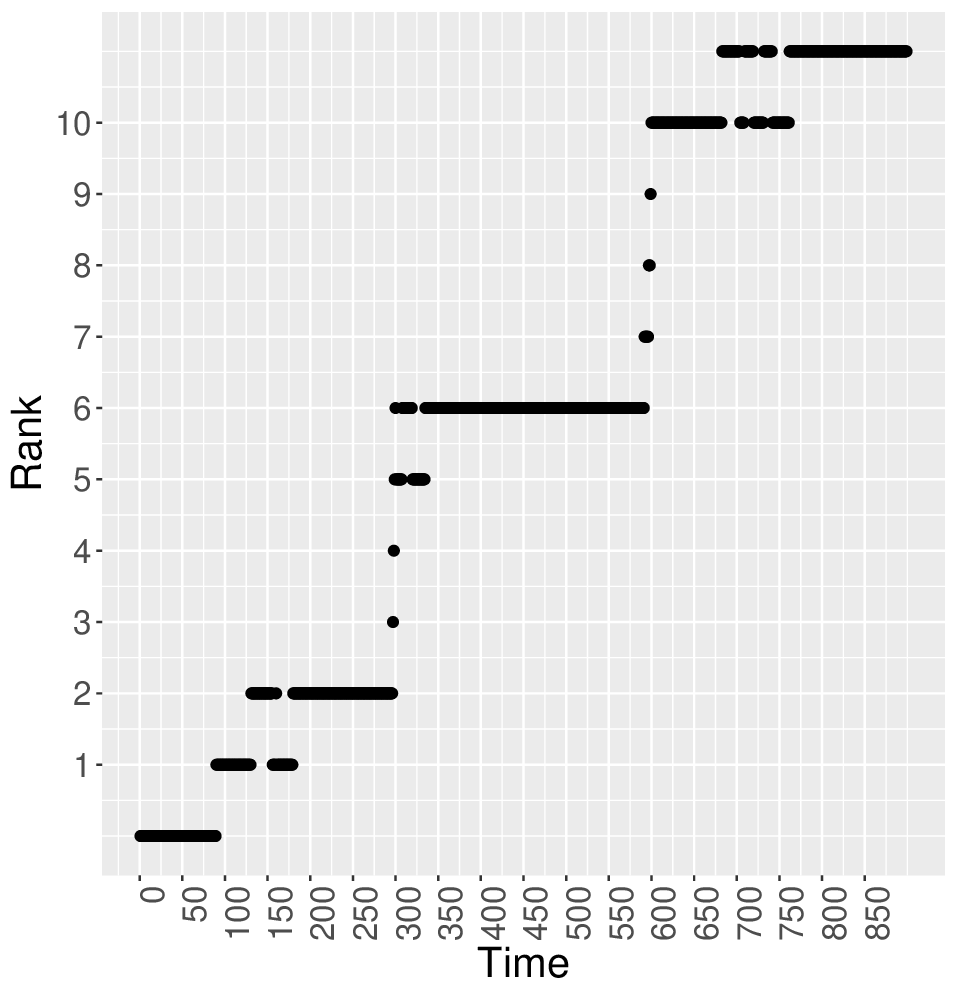} 
		\end{minipage}
	   }
    	\subfigure{
    		\begin{minipage}[b]{0.4\textwidth}
   		 	\includegraphics[width=1\textwidth,height=6cm]{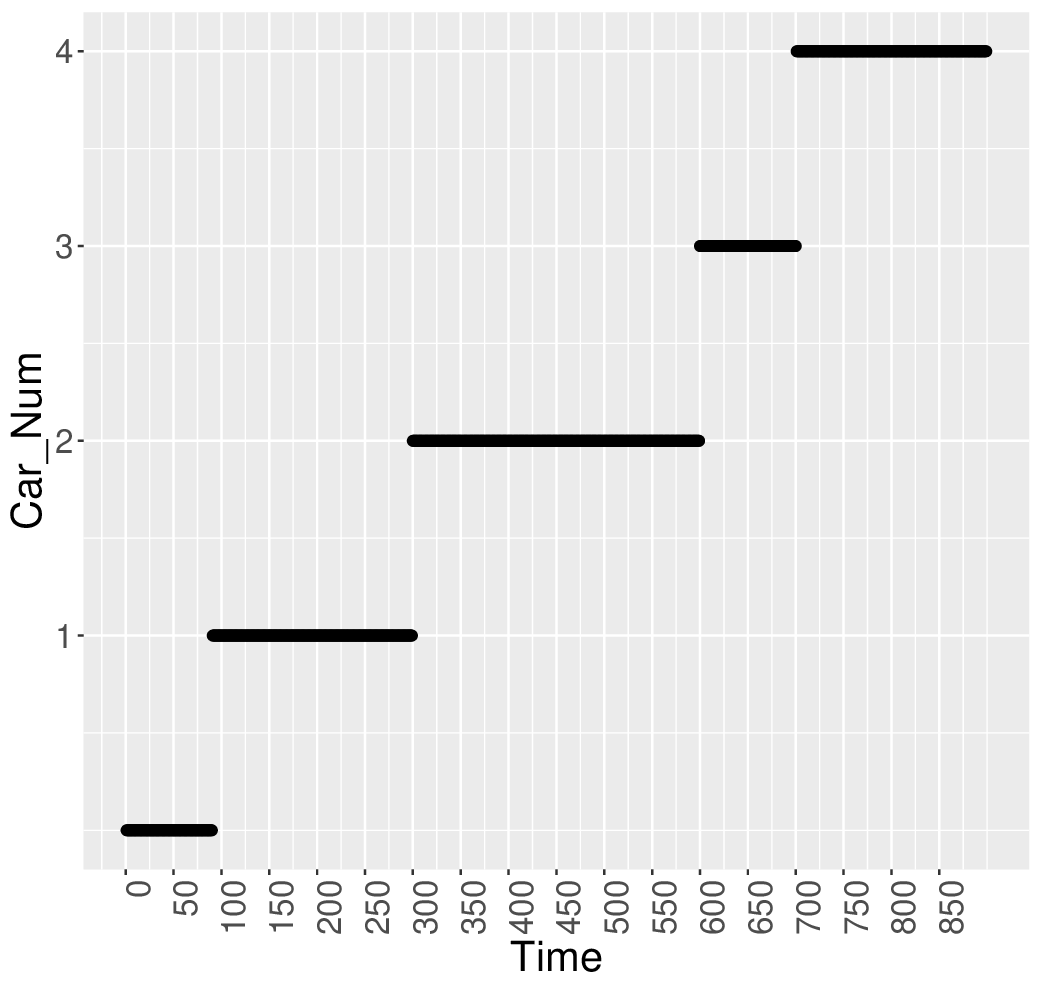}
    		\end{minipage}
    	}
\caption{Left: matrix rank detected. Right: number of cars in the parking lot (Ground Truth)}\label{fig4.6202205021913}
\end{figure}

\section{Conclusion}\label{sec804092024}
In this paper, we propose novel methods for statistical inference of the rank of high-dimensional matrix-valued data. To the best of our knowledge, this is the first work that focuses on testing the structure of matrix-valued data. Our contributions fill the theoretical and practical gap between inference on vector-valued data and inference of matrix-valued data. Specifically, we introduce a new test statistic that is suitable for high-dimensional settings, as the minimum discrepancy type test that works for low-dimensional settings fails under such scenarios. We also prove the oracle property of the proposed test statistic. Additionally, we propose a practical method for sparse SVD. We demonstrate the effectiveness of our testing procedure through the analysis of two surveillance video datasets, which shows the potential application of our method in object detection.

\bibliographystyle{agsm} 
\bibliography{hdtest.bib}

\end{document}